\documentclass[onecolumn,american,english,final, twocolumn]{IEEEtran}
\usepackage[T1]{fontenc}
\usepackage[latin9]{inputenc}
\usepackage{mathrsfs}
\usepackage{bm}
\usepackage{amsmath}
\usepackage{amsthm}
\usepackage{amssymb}
\usepackage{graphicx}

\makeatletter
\theoremstyle{plain}
\newtheorem{lem}{\protect\lemmaname}
\theoremstyle{plain}
\newtheorem{thm}{\protect\theoremname}
\theoremstyle{plain}
\newtheorem{cor}{\protect\corollaryname}
\theoremstyle{plain}
\newtheorem{prop}{\protect\propositionname}

\usepackage{amsmath}
\usepackage{amssymb}
\usepackage{multicol}
\usepackage{algorithm}
\usepackage{algorithmic}
\usepackage{array}
\usepackage{booktabs}

\usepackage[level=3]{wgroup_message}  
\usepackage{graphicx,psfrag,cite,subfigure}

\author{

\IEEEauthorblockN{Haochen~Xu$^{*\ddagger}$, Junting~Chen$^{*\ddagger}$, and Pooi-Yuen~Kam$^{\dagger\ddagger}$}

\IEEEauthorblockA{
$^{*}$School of Science and Engineering,
$^{\dagger}$School of Artificial Intelligence,\\
$^{\ddagger}$Shenzhen Future Network of Intelligence Institute (FNii-Shenzhen),\\
The Chinese University of Hong Kong, Shenzhen, Guangdong 518172, China}
}


\usepackage[acronym]{glossaries}
\newcommand{\newac}{\newacronym}
\newcommand{\ac}{\gls}

\newcommand{\acpl}{\glspl}

\makeglossaries

\newac{dft}{DFT}{Discrete Fourier Transform}
\newac{hmm}{HMM}{hidden Markov model}
\newac{ris}{RIS}{Reconfigurable intelligent surface}
\newac{awgn}{AWGN}{additive white Gaussian noise}
\newac{aod}{AOD}{angle of departure}
\newac{aoa}{AOA}{angle of arrival}
\newac{bs}{BS}{base station}
\newac{ue}{UE}{user}
\newac{mle}{MLE}{maximum likelihood estimation}
\newac{snr}{SNR}{signal-to-noise ratio}
\newac{csi}{CSI}{channel state information}
\newac{sbl}{SBL}{sparse Bayesian learning}
\newac{siso}{SISO}{Single Input Single Output}
\newac{hpbw}{HPBW}{Half Power Beam Width}
\newac{upa}{UPA}{uniform planar array}
\newac{mab}{L2O}{learning to optimize}
\newac{rf}{RF}{radio frequency}
\newac{mimo}{MIMO}{multiple input multiple output}
\newac{miso}{MISO}{multiple input single output}
\newac{los}{LOS}{line-of-sight}
\newac{urp}{URP}{unique representation property}
\newac{em}{EM}{electromagnetic}
\newac{sv}{SV}{Saleh-Valenzuela}
\newac{ts}{TS}{Thompson sampling}
\newac{ucb}{UCB}{upper confidence bound}
\newac{mmwave}{MMWave}{millimeter wave}
\newac{ula}{ULA}{uniform linear array}
\newac{mmse}{MMSE}{minimum mean square error}
\newac{nmse}{NMSE}{normalized mean squared error}
\newac{ls}{LS}{least square}
\newac{ofdm}{OFDM}{orthogonal frequency division multiplexing}
\newac{wlog}{w.l.o.g}{without loss of generality}
\newac{ecsm}{ECSM}{enhanced conditional sampled mean}
\newac{pdf}{PDF}{probability density function}
\setkeys{Gin}{width=1.0\columnwidth}

\usepackage{geometry}
\providecommand{\lyxdeleted}[3]{{\color{lyxdeleted}{}}}
\IEEEoverridecommandlockouts

\makeatother

\usepackage{babel}
\addto\captionsamerican{\renewcommand{\corollaryname}{Corollary}}
\addto\captionsamerican{\renewcommand{\lemmaname}{Lemma}}
\addto\captionsamerican{\renewcommand{\propositionname}{Proposition}}
\addto\captionsamerican{\renewcommand{\theoremname}{Theorem}}
\addto\captionsenglish{\renewcommand{\corollaryname}{Corollary}}
\addto\captionsenglish{\renewcommand{\lemmaname}{Lemma}}
\addto\captionsenglish{\renewcommand{\propositionname}{Proposition}}
\addto\captionsenglish{\renewcommand{\theoremname}{Theorem}}
\providecommand{\corollaryname}{Corollary}
\providecommand{\lemmaname}{Lemma}
\providecommand{\propositionname}{Proposition}
\providecommand{\theoremname}{Theorem}

\begin{document}
\title{Bayesian Bandit Beamforming with Implicit Channel Learning for RIS
under Hybrid Near/Far-Field Propagation
}\maketitle
\begin{abstract}
\acpl{ris} can improve high-frequency wireless links by shaping the
propagation environment, but their passive architecture makes channel
acquisition costly. Conventional estimate-then-optimize methods usually
require pilot overhead that scales with the number of reflecting elements,
which is undesirable under short coherence times and hybrid near-/far-field
propagation. This paper proposes the Bayesian bandit framework for
\ac{ris} phase-shift configuration with implicit channel learning.
The method updates a Gaussian posterior of the cascaded channel from
one scalar pilot observation per slot and uses Thompson sampling to
balance channel estimation and beamforming gain. We derive a Bayesian
regret decomposition that connects Bayesian received-power regret
to posterior uncertainty contraction, and further establish a conditional
sublinear Bayesian-regret guarantee. To exploit sparse hybrid-field
propagation, we develop an energy-focusing angle-distance dictionary
and a \ac{sbl}-based Thompson-sampling algorithm with warm-started
hyperparameter refinement. Simulations show that the proposed policies
approach the perfect-\ac{csi} benchmark in the \ac{los}-dominant
setting within 10 time block and improve transmission efficiency over
the considered baselines in multipath and Rayleigh fading scenarios.
\end{abstract}

\begin{IEEEkeywords}
RIS, Bayesian bandit beamforming, hybrid-field, Thompson sampling,
sparse Bayesian learning
\end{IEEEkeywords}

\section{Introduction}

High-frequency wireless communications are expected to support data-intensive
services in future 6G networks by exploiting a large bandwidth \cite{Miao2025}.
However, high-frequency links suffer from severe path loss, blockage
sensitivity, and limited coverage. \acpl{ris}, which consist of a
large number of passive reflecting elements, provide a cost effective
means to reshape the wireless propagation environment and enhance
link quality \cite{Du2024}. By properly configuring the \ac{ris}
phase shifts, the signals reflected by different elements can be coherently
combined at the receiver. Despite this potential, RIS phase-shift
configuration remains a challenging signal processing problem in wireless
communication \cite{zhao2021}.

A major challenge in RIS phase-shift configuration is channel acquisition
from passive and high-dimensional measurements. Early studies typically
assume that accurate \ac{csi} is available, and then optimize the
active and passive beamformers using semidefinite relaxation, alternating
optimization, manifold optimization, or related non-convex optimization
methods \cite{SDR,Arora2022,MO}. These works demonstrate the potential
gains of \ac{ris}-aided transmission, but the perfect-\ac{csi} assumption
is difficult to satisfy in practice. Since a passive \ac{ris} has
no radio-frequency chains or baseband processing capability, it cannot
directly observe the incident signal. The cascaded \ac{bs}-\ac{ris}-user
channel must therefore be inferred indirectly from pilot measurements
collected at the user or the \ac{bs}. The resulting pilot overhead
usually scales with the number of \ac{ris} elements, making channel
acquisition a high dimensional measurement problem with prohibitive
overhead for large-scale RISs and short channel coherence times.

A large number of works reduce this overhead by exploiting additional
channel structures. Element-grouping methods let adjacent RIS elements
share the same reflection coefficient, thereby reducing the number
of unknown channel coefficients \cite{group}. However, grouping lowers
the spatial resolution of the RIS and may sacrifice beamforming gain.
Compressive-sensing-based methods exploit angular or polar-domain
sparsity in millimeter-wave channels and recover the cascaded channel
from fewer pilots using sparse or structured recovery algorithms \cite{RIS_CS1,cao2026,RIS_CS2}.
Multiuser methods exploit shared channel structure among users and
perform joint channel inference through matrix calibration or message
passing \cite{xia2021}. Parametric methods further estimate angles
and distances using short pilot sequences in near-field and far-field
RIS channels \cite{MLE2}. These methods improve pilot efficiency,
but they still follow an estimate-then-optimize paradigm: a dedicated
training stage is first used for CSI acquisition, and data transmission
is performed only after channel estimation. Note that given a probed
RIS configuration in the training stage, only a small portion of the
bandwidth is sufficient to estimate the equivalent RIS channel. However,
the remaining bandwidth may not be efficiently used for data payload
transmission because the probed RIS configuration is not optimized
for data transmission, resulting in a waste of time-frequency resources
in the training stage.

Beyond channel estimation-based designs, beam control can also be
realized through dedicated hardware \cite{engineering3} and signal
processing methods. Specifically, beam-training and blind beamforming
methods avoid explicit full-channel estimation by searching for a
good reflection pattern. Exhaustive beam sweeping is simple but incurs
large overhead when the codebook is large \cite{Wang2022}. Hierarchical
beam searches reduce this cost by progressively refining the beam
direction \cite{hibeamtraining}. Blind beamforming methods configure
RIS phase shifts based on received signal strength and can provide
performance guarantees under suitable assumptions \cite{blind2}.
Learning-based methods, including unsupervised learning, reinforcement
learning, and temporal tracking approaches, further reduce online
search overhead by exploiting training data, side information, or
channel dynamics \cite{Yang2021,Ouyang2023,xia2024learning,Han2025}. 

Among online beam control methods, bandit-based beam alignment uses
online feedback to balance exploration and exploitation over predefined
beam codebooks \cite{TSSBL,Bandit1}. Adaptive Thompson sampling has
also been studied for \ac{miso} beam tracking, where an active \ac{bs}
precoder is selected to jointly support \ac{csi} acquisition and
data transmission \cite{xu2024adaptive}. These methods are effective
when the beam space is well represented by a finite codebook, or when
reliable training data or temporal models are available. However,
they do not directly address the passive RIS observation structure:
different from active BS precoding in \ac{miso} systems, each selected
RIS phase vector is not merely a beam index, but also a sensing vector
that produces a scalar observation of the underlying cascaded channel.
Moreover, the same RIS configuration should support both channel probing
and payload transmission in a slot.

Another challenge comes from large-aperture RIS propagation. When
the physical aperture of the RIS becomes large relative to the wavelength,
the Rayleigh distance can extend to tens or even hundreds of meters,
making spherical-wave near-field propagation non-negligible \cite{Guerra2021distance}.
Some scatterers may lie in the near-field of the RIS, where both angle
and distance are needed to describe the spherical wavefront, while
others may remain in the far-field, where angular information is sufficient.
Far-field angular dictionaries are compact but inaccurate for near-field
components, whereas polar-domain near-field dictionaries improve representation
accuracy at the cost of a larger dictionary size \cite{3dB_coherence,columncovariance}.
For online RIS beamforming, the dictionary must not only represent
hybrid near-/far-field propagation accurately, but also remain compact
enough for fast posterior updates and Thompson sampling.

The above discussion reveals two coupled bottlenecks: first, RIS phase
shifts should be learned without reserving many slots solely for channel
training; second, hybrid-field sparsity should be exploited without
creating an overly large dictionary. This paper addresses these challenges
by developing a Bayesian bandit framework for RIS beamforming with
implicit channel learning. Instead of separating channel estimation
and data transmission, in each slot, the selected RIS phase-shift
vector is used to obtain a scalar pilot observation, update a Bayesian
posterior of the cascaded channel, and support payload transmission
under the same RIS configuration. This formulation naturally captures
the exploration-exploitation trade-off: the RIS should probe uncertain
channel directions to improve future decisions, while maintaining
high data transmission efficiency.

Such a framework is nontrivial for three reasons. First, the RIS action
is continuous and constrained by unit-modulus phase shifts, unlike
conventional finite-arm bandit models. Second, the information gained
in each slot depends on the RIS configuration selected by the learning
policy itself, which couples posterior contraction and regret analysis
with the beamforming decisions. Third, under hybrid near-/far-field
propagation, the channel representation must be accurate, sparse,
and computationally manageable. To address these issues, we develop
a Gaussian posterior-based Thompson-sampling method for implicit channel
learning, derive a Bayesian regret decomposition that connects Bayesian
received-power regret to posterior uncertainty contraction, and design
an energy-focusing angle-distance dictionary together with a \ac{sbl}-based
Thompson-sampling algorithm.

\begin{itemize}
\item We formulate RIS phase-shift configuration under a shared pilot-data
architecture as a Bayesian bandit beamforming problem. Under an effective
Gaussian prior of the cascaded channel, we derive a recursive \ac{mmse}
posterior update from scalar slot-wise observations and combine it
with Thompson sampling for uncertainty-aware RIS control. The posterior
recursion has per-slot complexity $\mathcal{O}(N^{2})$ and does not
grow with the slot index.
\item We analyze the Bayesian regret of the proposed \ac{mmse}-based Thompson-sampling
policy. A Bayesian regret decomposition is derived, showing that the
Bayesian received-power regret is controlled by the contraction of
the posterior covariance. We further establish a conditional sublinear
Bayesian-regret guarantee under a cumulative information growth condition.
\item We propose an \ac{sbl}-based Thompson-sampling algorithm that performs
posterior sampling in the sparse coefficient domain. A warm-started
hyperparameter refinement procedure is developed to exploit hybrid-field
sparsity while maintaining practical online complexity.
\item Simulations under \ac{los}-dominant, hybrid multipath, and Rayleigh
fading scenarios show that the proposed methods rapidly approach the
perfect-\ac{csi} benchmark in sparse channels and improve spectral
efficiency over estimation-based, random-probing, and finite-codebook
bandit baselines.
\end{itemize}

The rest of the paper is organized as follows. Section II presents
the system model and Bayesian bandit formulation. Section III develops
the \ac{mmse}-based Bayesian bandit beamforming scheme. Section IV
analyzes the regret performance. Section V introduces the hybrid-field
energy-focusing dictionary and the \ac{sbl}-based extension. Section
VI presents simulation results. Section VII concludes the paper.

\section{System Model and Bayesian Bandit Formulation}

In this section, we present the transmission model for a passive \ac{ris}-assisted
link under a single-pattern shared pilot-and-data architecture. We
then formulate the corresponding online \ac{ris} control problem.
Unlike conventional designs that separate training and transmission,
each \ac{ris} reflection pattern serves two roles in every slot.
It determines both the channel estimation quality and the data-transmission
performance.

\subsection{System Model}

\selectlanguage{american}%
\begin{figure}
\begin{centering}
\includegraphics[width=0.8\columnwidth]{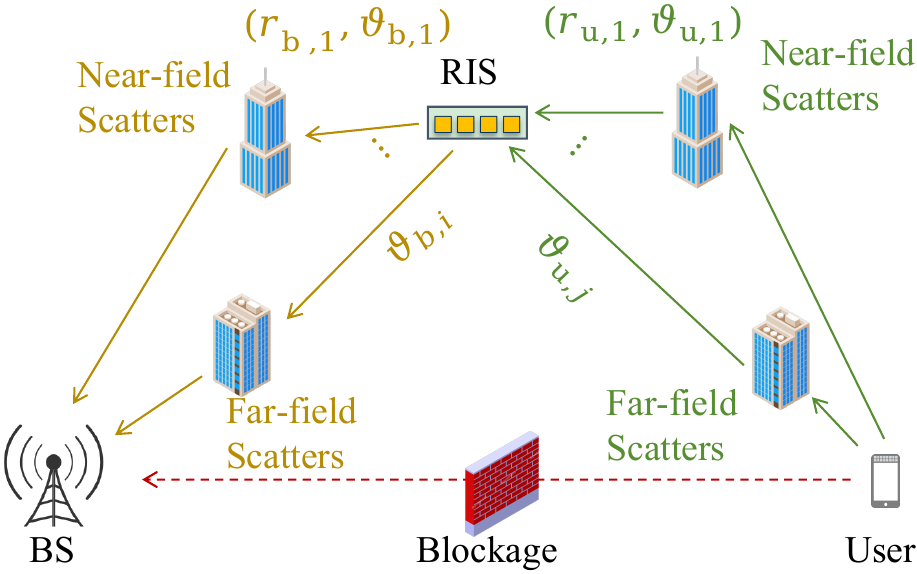}
\par\end{centering}
\caption{\protect\label{fig:capacity-1-2-2}Illustration of the BS-RIS-user
channel, where the direct path between the BS and the user is blocked.
Scatterers may appear in both the near-field and the far-field of
the RIS.}
\end{figure}

\selectlanguage{english}%
As shown in Fig.~\ref{fig:capacity-1-2-2}, we consider a single
user narrowband downlink system assisted by an \ac{ris}, where both
the \ac{bs} and the user are each equipped with a single antenna.
For clarity, we consider an \ac{ris} with $N$ reflecting elements
arranged as a \ac{ula}. The same principle can be extended to a UPA
by using two angular variables and a range variable. We assume that
the direct link between the \ac{bs} and the user is obstructed and
negligible. Therefore, the goal is to optimize the configuration of
the \ac{ris} to boost the \ac{bs}-\ac{ris}-user channel.

Let $\mathbf{h}_{\text{b}}=\left[h_{\text{b},1},h_{\text{b},2},\ldots,h_{\text{b},N}\right]^{\text{\ensuremath{\top}}}\in\mathbb{C}^{N}$
and $\mathbf{h}_{\text{u}}=\left[h_{\text{u},1},h_{\text{u},2},\ldots,h_{\text{u},N}\right]^{\text{\ensuremath{\top}}}\in\mathbb{C}^{N}$
denote the BS-RIS and RIS-user channel vectors, respectively. For
downlink transmission, define $h_{n}=h_{\text{b},n}h_{\text{u},n}$
as the equivalent cascaded coefficient through the RIS element $n$.
At slot $t$, the RIS applies the reflection vector $\mathbf{b}_{t}=[b_{t,1},b_{t,2},\ldots,b_{t,N}]^{\mathrm{\top}}\in\mathcal{B}$,
where $\mathcal{B}\triangleq\{\mathbf{b}\in\mathbb{C}^{N}:\;|[\mathbf{b}]_{n}|=1,\ \forall n\}$.
The resulting equivalent BS-RIS-user scalar channel is $\sum_{n=1}^{N}h_{n}b_{t,n}$.

In each time slot, $N_{\text{s}}$ symbols are transmitted. For example,
in a multi-carrier system, such as \ac{ofdm}, that consists of $N_{\text{s}}$
subcarriers, the $N_{\text{s}}$ symbols are multiplexed in the frequency
domain. If the channel is under flat fading, only a few pilot symbols
$N_{\text{p}}$ are needed for estimating the equivalent scalar channel
for data demodulation, while the remaining symbols $N_{\text{d}}$
can be used for data payload with $N_{\text{s}}=N_{\text{p}}+N_{\text{d}}$.
We define the payload fraction as $\eta\triangleq N_{\text{d}}/N_{\text{s}}\in(0,1]$.
It is assumed that the narrowband channel is in block fading, where
the channel $h_{n}$ remains unchanged for $T$ consecutive time slots.
Denote $\mathbf{x}_{t}\in\mathbb{C}^{N_{\text{s}}}$ as the vector
of transmitted symbols in the time slot $t$. The received signal
vector $\mathbf{y}_{t}$ is thus given by
\begin{align}
\mathbf{y}_{t} & =\sum_{n=1}^{N}h_{n}b_{t,n}\mathbf{x}_{t}+\bm{\xi}_{t}\label{eq:received=000020one}
\end{align}
where $\bm{\xi}_{t}\sim\mathcal{CN}(\bm{0},\sigma^{2}\mathbf{I})$
models the \ac{awgn}. 

To optimize the each RIS phase shift $b_{t,n}$, a conventional approach
first estimates the channel coefficients $h_{n}$ for all $n=1,2,\dots,N$.
This would require sending pilot sequences over $N$ time slots, because
different RIS configurations need to be distributed over time. If
the channel has sparsity, it is possible to send fewer pilots \cite{RIS_CS1,RIS_CS2}.
This motivates an online design that extracts channel information
while the RIS is already being used for communication.

\subsection{Bandit Beamforming and Problem Formulation}

As the equivalent channel $\sum_{n=1}^{N}h_{n}b_{t,n}$ in (\ref{eq:received=000020one})
is a scalar for a given RIS configuration $\mathbf{b}_{t}$, one or
a few pilot symbols are sufficient to estimate the scalar equivalent
channel $\sum_{n=1}^{N}h_{n}b_{t,n}$. The remaining symbols can be
used for payload transmission, so  the RIS configuration $\mathbf{b}_{t}$
should balance channel estimation and data transmission.

Such a philosophy falls into the paradigm of bandit approaches. The
\ac{bs} selects $\mathbf{b}_{t}$ adaptively from past observations.
Fig.~\ref{fig:capacity-1-2-1}(b) illustrates this online design
without a dedicated channel-estimation phase. By contrast, Fig.~\ref{fig:capacity-1-2-1}(a)
illustrates a conventional design with a dedicated channel-estimation
phase in the first $k$ time slots.

Here, we denote the channel $\mathbf{h}=\text{diag}(\mathbf{h}_{\text{b}})\mathbf{h}_{\text{u}}\in\mathbb{C}^{N}$
as the cascaded channel in vector form. Within one coherence block,
the cascaded channel is assumed unchanged over all $T$ slots. In
each slot, the same RIS reflection vector is used by the pilot symbols
and the data symbols. After normalization by the known pilot symbol,
one scalar pilot observation can be written as
\begin{equation}
y_{t}=\mathbf{b}_{t}^{\top}\mathbf{h}+\xi_{t}.\label{eq:scalar_pilot_observation}
\end{equation}
From the perspective of the \ac{bs} before receiving pilots, the
realization of $\mathbf{h}$ is assumed unknown. To enable the \ac{ris}
phase configuration, we represent the uncertainty of $\mathbf{h}$
through a statistical prior distribution $\mathbf{h}\sim\Pi_{0}$.
Let $\mathcal{F}_{t}\triangleq\sigma(\mathbf{b}_{1},y_{1},\mathbf{b}_{2},y_{2},\dots,\mathbf{b}_{t},y_{t})$
denote the information available after slot $t.$ Then, the corresponding
posterior belief of the channel is $\Pi_{t}\triangleq\mathcal{L}(\mathbf{h}|\mathcal{F}_{t})$
where $\mathcal{L}(\cdot|\cdot)$ is the conditional distribution.

We formulate bandit beamforming as a decision process that maps the
set of received signals $\{y_{1},y_{2},\dots,y_{t-1}\}$ to the \ac{ris}
configuration $\mathbf{b}_{t}$. Specifically, the \ac{ris} control
policy over the horizon $T$ is defined as a sequence of mapping $\mathbf{b}_{t}=\kappa(\Pi_{t-1})\in\mathcal{B},$
where $\kappa$ maps the currently available channel belief into a
feasible \ac{ris} configuration in $\mathcal{B}$. The goal of the
\ac{ris} controller is to find a beamforming strategy $\kappa$ to
maximize the cumulative expected effective rate
\begin{align}
\mathscr{P}:\quad\underset{\kappa}{\text{max}}\; & \mathbb{E}[\sum_{t=1}^{T}U_{t}(\mathbf{b}_{t};\mathbf{h})]\label{eq:problem}
\end{align}
where $b_{t}=\kappa(\Pi_{t-1})$ and the expectation is taken with
respect to the prior $\mathbf{h}\sim\Pi_{0}$, the noise process $\{\xi_{t}\}_{t=1}^{T}$,
and the internal randomness of the policy. The transmission utility
$U_{t}(\mathbf{b}_{t};\mathbf{h})$ is defined as the effective rate
as
\begin{equation}
U_{t}(\mathbf{b}_{t};\mathbf{h})=\eta\log_{2}(1+\frac{1}{\sigma^{2}}|\mathbf{b}_{t}^{\text{\ensuremath{\top}}}\mathbf{h}|^{2}).\label{eq:effective_rate}
\end{equation}

\selectlanguage{american}%
\begin{figure}
\begin{centering}
\includegraphics[width=1\columnwidth]{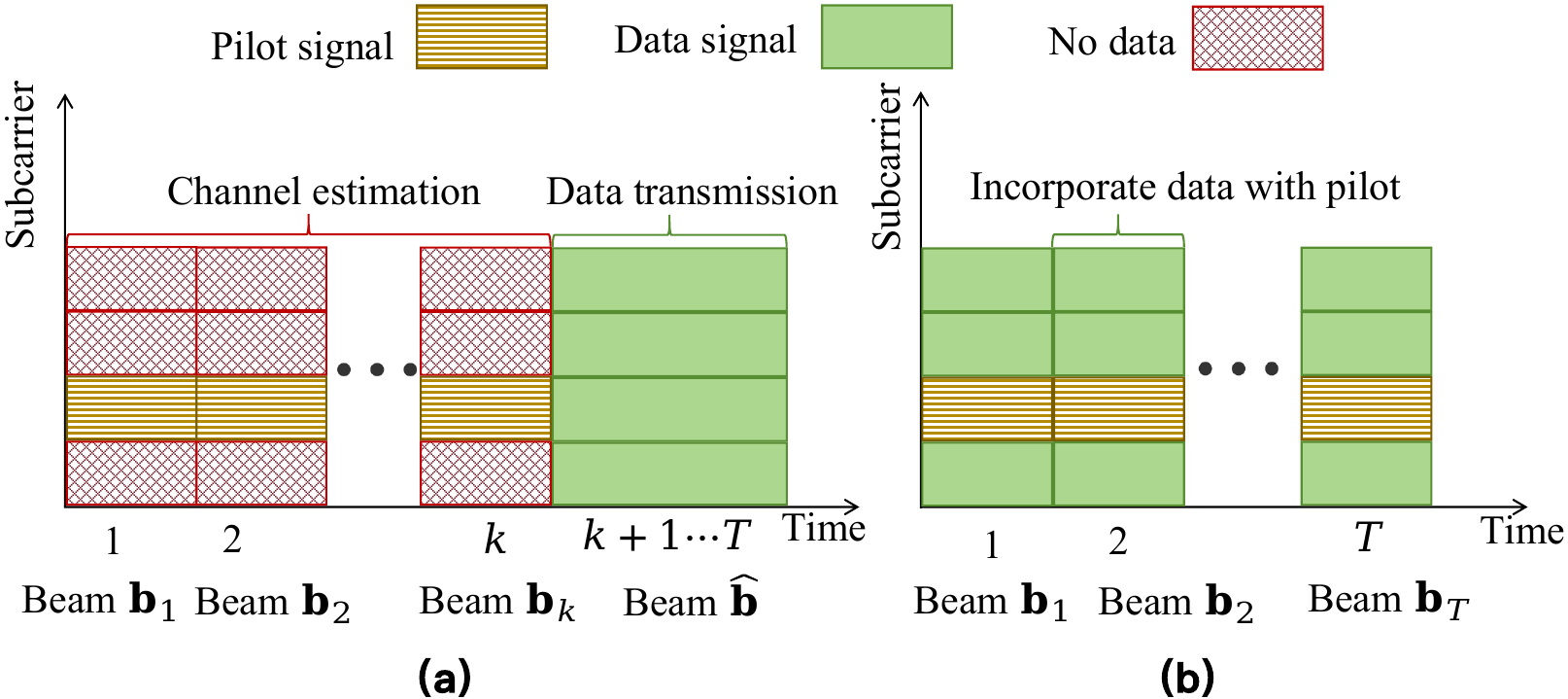}
\par\end{centering}
\caption{\protect\label{fig:capacity-1-2-1}Illustration of the frame structure
for CSI acquisition and data transmission. (a) The conventional channel
training and data transmission scheme (b) The proposed online bandit-beamforming
strategy embeds pilot symbols under each adaptive RIS configuration
while using the remaining resources for payload transmission.}
\end{figure}

\selectlanguage{english}%

\section{Bayesian Bandit Beamforming with\\ Implicit Channel Learning\protect\label{sec:Bayesian-Bandit-Beamforming}}

For notational simplicity, we first consider that only one symbol
in $\mathbf{x}_{t}$ is used as a pilot symbol, and assume this pilot
symbol is $x_{t,1}=1$. As we will only focus on the received signal
$y_{t}$ of this pilot symbol $x_{t,1}=1$, and simplify our notation
from the vector form in (\ref{eq:received=000020one}) to a scalar
form $y_{t}=\sum_{n=1}^{N}h_{n}b_{t,n}+\xi_{t}$. If multiple pilot
symbols are used under the same RIS pattern, their observations can
be averaged into the same scalar model with an effective noise variance. 

Conventional Bayesian bandit usually makes {\em discrete} decisions
from a finite set of possible actions. In the context of RIS beamforming,
it corresponds to drawing the RIS configuration $\mathbf{b}_{t}$
from a dictionary that contains finite elements. When the dictionary
is large, it is challenging to establish Bayesian posterior for each
of the elements in the dictionary, whereas, when the dictionary is
small, the RIS beamforming is strictly suboptimal. Hence, we propose
to employ a Bayesian model for the channel in a {\em continuous}
space and update its posterior in an iterative way for the improved
channel information.

\subsection{Implicit Channel Learning and Bayesian Posterior}

To solve problem (\ref{eq:problem}), we first assume that the equivalent
channel $\mathbf{h}$ follows the prior $\mathcal{CN}(\mathbf{0},\bm{\Sigma}_{0})$
at the initial time $t=0$. We assume that $\bm{\Sigma}_{0}$ is Hermitian
positive definite. If no prior information about the channel is available,
we simply set $\bm{\Sigma}_{0}=\mathbf{I}$. Given this Gaussian prior,
we update the channel belief and obtain the Bayesian posterior $\mathcal{CN}(\bm{\mu}_{t},\bm{\Sigma}_{t})$
based on the observations $\{y_{1},y_{2},\dots,y_{t}\}$ under the
RIS configuration $\{\mathbf{b}_{1},\mathbf{b}_{2},\dots,\mathbf{b}_{t}\}$
in the first $t$ time slots as follows. 

\subsubsection{Bayesian Channel Posterior}

Define the observation vector up to slot $t$ as $\mathbf{y}_{1:t}=[y_{1},y_{2},\dots,y_{t}]^{\text{\ensuremath{\top}}}$.
We also collect the corresponding RIS reflection vectors in $\mathbf{B}_{1:t}=[\mathbf{b}_{1},\mathbf{b}_{2},\dots,\mathbf{b}_{t}]\in\mathbb{C}^{N\times t}$.
The stacked received signal is
\begin{equation}
\mathbf{y}_{1:t}=\mathbf{B}_{1:t}^{\text{\ensuremath{\top}}}\mathbf{h}+\bm{\xi}_{1:t}\label{eq:received=000020signal}
\end{equation}
where the vector $\bm{\xi}_{1:t}=[\xi_{1},\xi_{2},\ldots,\xi_{t}]^{\text{\ensuremath{\top}}}$
denotes the aggregated noise vector up to time $t$ and $\xi_{t}\sim\mathcal{CN}(0,\sigma^{2})$. 

Under the linear observation model (\ref{eq:received=000020signal})
and the Gaussian prior for $\mathbf{h}$, an \ac{mmse} estimator
$\hat{\mathbf{h}}_{t}$ that minimizes the error $\mathbb{E}[||\hat{\mathbf{h}}_{t}-\mathbf{h}||^{2}]$
is given by 
\begin{equation}
\hat{\mathbf{h}}_{t}=\frac{1}{\sigma^{2}}(\bm{\Sigma}_{0}^{-1}+\frac{1}{\sigma^{2}}\mathbf{B}_{1:t}^{*}\mathbf{B}_{1:t}^{\top})^{-1}\mathbf{B}_{1:t}^{*}\mathbf{y}_{1:t}.\label{eq:posterior=000020mut}
\end{equation}
In addition, the error covariance matrix $\mathbb{E}[(\hat{\mathbf{h}}_{t}-\mathbf{h})(\hat{\mathbf{h}}_{t}-\mathbf{h})^{\text{H}}]$
is given as
\begin{align}
\bm{\Sigma}_{t} & =(\bm{\Sigma}_{0}^{-1}+\frac{1}{\sigma^{2}}\mathbf{B}_{1:t}^{*}\mathbf{B}_{1:t}^{\top})^{-1}\nonumber \\
 & =\bm{\Sigma}_{0}-\bm{\Sigma}_{0}\mathbf{B}_{1:t}^{*}(\sigma^{2}\mathbf{I}_{t}+\mathbf{B}_{1:t}^{\top}\bm{\Sigma}_{0}\mathbf{B}_{1:t}^{*})^{-1}\mathbf{B}_{1:t}^{\top}\bm{\Sigma}_{0}\label{eq:posterior=000020sigma}
\end{align}
where the equation (\ref{eq:posterior=000020sigma}) is obtained via
the Woodbury matrix identity \cite{kay1993fundamentals}.

While the estimator is given by closed-form expression (\ref{eq:posterior=000020mut}),
it is important to note that the dimension of the matrix $\mathbf{B}_{1:t}$
increases over time, since a new phase shift vector $\mathbf{b}_{t}$
is appended to $\mathbf{B}_{1:t-1}$ at each time step $t$. As a
result, computing the estimate $\hat{\mathbf{h}}_{t}$ in (\ref{eq:posterior=000020mut})
has an increasing complexity as $t$ evolves.

\subsubsection{Posterior Update from Sequential Observations}

Note that the posterior (\ref{eq:posterior=000020mut}) and (\ref{eq:posterior=000020sigma})
can be computed in an iterative way. Consider a complex Gaussian prior
$\mathcal{CN}(\bm{\mu}_{t-1},\bm{\Sigma}_{t-1})$ after the Bayesian
channel estimation at time slot $t-1$. Recall from (\ref{eq:received=000020one})
that the observation in time slot $t$ can be written in vector form
as $y_{t}=\mathbf{b}_{t}^{\top}\mathbf{h}+\xi_{t}.$

Following the same result in \cite[Theorem 12.1]{kay1993fundamentals},
the \ac{mmse} estimator is given by
\begin{equation}
\hat{\mathbf{h}}_{t}=\bm{\mu}_{t-1}+\sigma^{-2}\bm{\Sigma}_{t}\mathbf{b}_{t}^{*}(y_{t}-\mathbf{b}_{t}^{\top}\bm{\mu}_{t-1})\label{eq:iterative=000020mu}
\end{equation}
and the posterior error covariance matrix is updated as
\begin{align}
\bm{\Sigma}_{t} & =\left(\bm{\Sigma}_{t-1}^{-1}+\sigma^{-2}\mathbf{b}_{t}^{*}\mathbf{b}_{t}^{\top}\right)^{-1}\nonumber \\
 & =\bm{\Sigma}_{t-1}-\bm{\Sigma}_{t-1}\mathbf{b}_{t}^{*}(\sigma^{2}+\mathbf{b}_{t}^{\top}\bm{\Sigma}_{t-1}\mathbf{b}_{t}^{*})^{-1}\mathbf{b}_{t}^{\top}\bm{\Sigma}_{t-1}\label{eq:iterative=000020sigma}
\end{align}
where (\ref{eq:iterative=000020sigma}) is obtained via the Sherman-Morrison
formula \cite{horn2012matrix}.

Note that as a key property of the \ac{mmse} estimator, the posterior
mean $\bm{\mu}_{t}\triangleq\mathbb{E}[\mathbf{h}|\mathcal{F}_{t}]$
is equal to the MMSE estimate $\hat{\mathbf{h}}_{t}$$.$ Such a posterior
becomes the prior for the next time slot, and it follows that
\begin{equation}
\bm{\mu}_{t}=\bm{\mu}_{t-1}+\sigma^{-2}\bm{\Sigma}_{t}\mathbf{b}_{t}^{*}(y_{t}-\mathbf{b}_{t}^{\top}\bm{\mu}_{t-1}).
\end{equation}
One can show that the estimate (\ref{eq:iterative=000020mu}) is equivalent
to (\ref{eq:posterior=000020mut}), but has a lower computational
complexity; similarly, (\ref{eq:iterative=000020sigma}) is equivalent
to (\ref{eq:posterior=000020sigma}). Specifically, using the Woodbury
form, the computational complexity of the batch MMSE estimate (\ref{eq:posterior=000020mut})
is $\mathcal{O}(N^{2}t+Nt^{2}+t^{3})$, depending on the implementation,
where the complexity quickly scales up as $t$ increases. By contrast,
the iterative approach (\ref{eq:iterative=000020mu}) has a complexity
of $\mathcal{O}(N^{2})$ which does not scale with $t$. This $\mathcal{O}(N^{2})$
count covers the posterior mean and covariance recursion. 

\subsection{Thompson Sampling for RIS Configuration \protect\label{subsec:Thompson-Sampling-for-RIS-configuration}}

Recall that the continuous action space for the \ac{ris} configuration
as $\mathcal{B}$. Given the Bayesian channel posterior $\mathbf{h}\sim\mathcal{CN}(\bm{\mu}_{t-1},\bm{\Sigma}_{t-1})$
obtained from the $t-1$ observations and since the effective-rate
utility is monotone increasing in the received power, we equivalently
use the slot-wise reward $r_{t}=|\mathbf{b}_{t}^{\top}\mathbf{h}|^{2}$
in (\ref{eq:effective_rate}), a greedy benchmark would use the posterior
mean estimate $\hat{\mathbf{h}}=\bm{\mu}_{t-1}$ of $\mathbf{h}$
and then choose a $\mathbf{b}\in\mathcal{U}$ that maximizes the reward
$|\mathbf{b}^{\top}\bm{\mu}_{t-1}|^{2}$.

For a fixed posterior mean, this greedy phase-alignment problem has
a closed-form global solution. Its weakness is not local optimality,
but premature exploitation. If the posterior mean is inaccurate in
early slots, repeatedly aligning to it may fail to probe high-uncertainty
channel directions and may slow posterior contraction. Thompson sampling
provides a Bayesian mechanism to balance channel learning (exploration)
and data transmission (exploitation). Specifically, it samples from
the distribution of the reward $r_{t}$ according to the posterior
of $\mathbf{h}$. Towards this end, at the beginning of the time slot
$t$, we first obtain a sample $\mathbf{h}'_{t}$ from the distribution
\begin{equation}
\mathbf{h}'_{t}\sim\mathcal{CN}(\bm{\mu}_{t-1},\bm{\Sigma}_{t-1}).
\end{equation}
Then, the RIS configuration as the action at the time slot $t$ is
obtained as the solution to the following problem
\begin{alignat}{1}
\mathop{\mbox{maximize}}\limits_{\mathbf{b}\in\mathcal{B}} & \quad\left|\mathbf{b}^{\top}\mathbf{h}'_{t}\right|^{2}.\label{eq:11-3}
\end{alignat}

Denote the sample $\mathbf{h}_{t}'=(h_{t,1}',h_{t,2}',\dots,h_{t,N}')$
as $h'_{t,n}=\alpha'_{n}e^{j\theta'_{n}}$. It is known from the literature
that the optimal solution to (\ref{eq:11-3}) is given as $b_{n}=e^{-j\theta'_{n}}$
(See Equation (21) in \cite{phaseshift}).

\begin{algorithm}[t]
\caption{Recursive MMSE-based Thompson sampling}
\label{alg:mmse-ts}
\begin{algorithmic}[1]
\REQUIRE Prior $\boldsymbol\mu_0=\mathbf 0$, covariance $\boldsymbol\Sigma_0$, noise variance $\sigma^2$, and horizon $T$.
\FOR{$t=1,2,\ldots,T$}
\STATE Draw $\mathbf h'_t\sim\mathcal{CN}(\boldsymbol\mu_{t-1},\boldsymbol\Sigma_{t-1})$.
\STATE Set $[\mathbf b_t]_n=\exp\{-j\angle([\mathbf h'_t]_n)\}$ for $n=1,\ldots,N$.
\STATE Use $\mathbf b_t$ for the pilot and payload symbols in slot $t$ and observe $y_t$.
\STATE Update $\boldsymbol\mu_t$ and $\boldsymbol\Sigma_t$ by \eqref{eq:iterative=000020mu} and \eqref{eq:iterative=000020sigma}.
\ENDFOR
\ENSURE RIS configurations $\{\mathbf b_t\}$ and posterior sequence $\{\boldsymbol\mu_t,\boldsymbol\Sigma_t\}$.
\end{algorithmic}
\end{algorithm}

\section{Performance Analysis}

In this section, we analyze the long-term performance of the proposed
Bayesian bandit beamforming policy. Since the effective-rate utility
(\ref{eq:effective_rate}) is a monotone and Lipschitz-continuous
function of the received power, we first study the Bayesian regret
in terms of the received-power regret and then relate it to the effective-rate
regret.

\subsection{Bayesian Regret}

Recall that the slot-wise communication utility is defined by the
effective rate in (\ref{eq:effective_rate}), which is an increasing
function of the received signal power $|\mathbf{b}_{t}^{\top}{\bf h}|^{2}$.
To make the regret analysis precise, define the scalar rate function
$u(x)\triangleq\eta\log_{2}(1+\sigma^{-2}x)$ for received power $x\ge0$.
Then $U_{t}(\mathbf{b};\mathbf{h})=u(|\mathbf{b}^{\top}\mathbf{h}|^{2})$.
Since $u(x)$ is increasing, the action that maximizes received power
also maximizes effective rate for a fixed channel realization. We
therefore first analyze regret in the received-power domain and then
translate it to effective-rate regret using the Lipschitz continuity
of $u(x)$. Specifically
\begin{equation}
u'(x)=\frac{\eta\sigma^{-2}}{\ln2}\frac{1}{1+\sigma^{-2}x}\le\frac{\eta}{\sigma^{2}\ln2},\qquad x\ge0
\end{equation}
so $u(x)$ is Lipschitz continuous \cite{real_analysis}. Accordingly,
we define the Bayesian received-power regret as
\begin{equation}
\mathcal{R}_{T}\triangleq\mathbb{E}\left[\sum_{t=1}^{T}\Big(|\mathbf{b}_{*}^{\text{\ensuremath{\top}}}\mathbf{h}|^{2}-|\mathbf{b}_{t}^{\text{\ensuremath{\top}}}\mathbf{h}|^{2}\Big)\right]\label{eq:bayes_snr_regret_sec2}
\end{equation}
where $\mathbf{b}_{*}\in\arg\max_{\mathbf{b}\in\mathcal{B}}|\mathbf{b}^{\top}\mathbf{h}|^{2}$
denotes an optimal RIS reflection vector under the true channel realization
$\mathbf{h}$. Thus, $\left|\mathbf{b}_{*}^{\top}\mathbf{h}\right|^{2}$
is the oracle received power for the channel realization $\mathbf{h}$.
If $\mathcal{R}_{T}=o(T)$, then the time-averaged expected received-power
regret $\mathcal{R}_{T}/T$ vanishes. This makes received-power regret
a convenient surrogate for proving long-term learning performance.

The corresponding Bayesian effective-rate regret is
\begin{equation}
\mathcal{R}_{T}^{{\rm rate}}\triangleq\mathbb{E}\left[\sum_{t=1}^{T}\big(U_{t}(\mathbf{b}_{*};\mathbf{h})-U_{t}(\mathbf{b}_{t};\mathbf{h})\big)\right].
\end{equation}
By the Lipschitz bound above, it satisfies
\begin{equation}
\mathcal{R}_{T}^{{\rm rate}}\le\frac{\eta}{\sigma^{2}\ln2}\mathcal{R}_{T}.\label{eq:rate_power_regret_relation}
\end{equation}
Therefore, sublinear Bayesian received-power regret also implies sublinear
Bayesian effective-rate regret.

\subsection{Regret Decomposition of the Proposed Algorithm }

We first quantify the one-slot regret caused by aligning the RIS phases
to a posterior sample rather than to the true cascaded channel. The
following lemma upper-bounds this regret in terms of the mismatch
between the sampled channel and the true channel.
\begin{lem}[\textbf{One-slot Mismatch Bound}]
\label{lem:One-slot-Mismatching-Bound} For any $\mathbf{h},\mathbf{h}'_{t}\in\mathbb{C}^{N}$,
let $\mathbf{b}_{*}$ and $\mathbf{b}_{t}$ be the unit-modulus phase-alignment
vectors for $\mathbf{h}$ and $\mathbf{h}'_{t}$, respectively, i.e.,
$[\mathbf{b}_{*}]_{n}=e^{-j\angle(h_{n})}$ and $[\mathbf{b}_{t}]_{n}=e^{-j\angle(h'_{t,n})}$.
Then
\begin{equation}
|\mathbf{b}_{*}^{\top}\mathbf{h}|^{2}-|\mathbf{b}_{t}^{\text{\ensuremath{\top}}}\mathbf{h}|^{2}\leq4N||\mathbf{h}||_{2}||\mathbf{h}-\mathbf{h}'_{t}||_{2}.\label{eq:one_slot_decomposition}
\end{equation}
\end{lem}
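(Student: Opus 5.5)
We bound the left-hand side directly through the phase mismatch between $\mathbf{h}$ and $\mathbf{h}'_t$. We use the closed form of the oracle power and a triangle-inequality estimate for the aligned power.

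First, since $[\mathbf{b}_*]_n=e^{-j\angle(h_n)}$, we have $\mathbf{b}_*^\top\mathbf{h}=\sum_n|h_n|=\|\mathbf{h}\|_1$. Hence $|\mathbf{b}_*^\top\mathbf{h}|^2=\|\mathbf{h}\|_1^2$. We then write the left-hand side as a difference of squares:
\begin{equation*}
\|\mathbf{h}\|_1^2-|\mathbf{b}_t^\top\mathbf{h}|^2=\big(\|\mathbf{h}\|_1-|\mathbf{b}_t^\top\mathbf{h}|\big)\big(\|\mathbf{h}\|_1+|\mathbf{b}_t^\top\mathbf{h}|\big).
\end{equation*}
The second factor is at most $2\|\mathbf{h}\|_1\le 2\sqrt{N}\|\mathbf{h}\|_2$, using $|\mathbf{b}_t^\top\mathbf{h}|\le\sum_n|h_n|$ and Cauchy--Schwarz. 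The first factor is nonnegative.

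Second, we bound the first factor by comparing with the sampled channel. Since $\mathbf{b}_t$ aligns $\mathbf{h}'_t$, we have $\mathbf{b}_t^\top\mathbf{h}'_t=\|\mathbf{h}'_t\|_1$. By the reverse triangle inequality,
\begin{equation*}
|\mathbf{b}_t^\top\mathbf{h}|\ge|\mathbf{b}_t^\top\mathbf{h}'_t|-|\mathbf{b}_t^\top(\mathbf{h}-\mathbf{h}'_t)|\ge\|\mathbf{h}'_t\|_1-\|\mathbf{h}-\mathbf{h}'_t\|_1.
\end{equation*}
Moreover, $\|\mathbf{h}'_t\|_1\ge\|\mathbf{h}\|_1-\|\mathbf{h}-\mathbf{h}'_t\|_1$. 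Combining these gives
\begin{equation*}
\|\mathbf{h}\|_1-|\mathbf{b}_t^\top\mathbf{h}|\le2\|\mathbf{h}-\mathbf{h}'_t\|_1\le2\sqrt{N}\|\mathbf{h}-\mathbf{h}'_t\|_2.
\end{equation*}
Here the unit modulus of the entries of $\mathbf{b}_t$ gives $|\mathbf{b}_t^\top\mathbf{v}|\le\|\mathbf{v}\|_1$ for any $\mathbf{v}\in\mathbb{C}^N$.

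Multiplying the two factor bounds yields $(2\sqrt{N}\|\mathbf{h}-\mathbf{h}'_t\|_2)(2\sqrt{N}\|\mathbf{h}\|_2)=4N\|\mathbf{h}\|_2\|\mathbf{h}-\mathbf{h}'_t\|_2$, which is exactly (\ref{eq:one_slot_decomposition}).

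The main point needing care is the lower bound on $|\mathbf{b}_t^\top\mathbf{h}|$. The argument must go through $\mathbf{h}'_t$ rather than through $\mathbf{h}$ directly, because $\mathbf{b}_t$ is optimal only for $\mathbf{h}'_t$. Routing through $\mathbf{h}'_t$ costs a factor of $2$, which is what yields the constant $4$. A second detail is the convention $\angle(0)$ for zero entries. Any phase choice there is consistent with the argument, because only $|b_{t,n}|=1$ and the identity $\mathbf{b}_t^\top\mathbf{h}'_t=\|\mathbf{h}'_t\|_1$ are used.
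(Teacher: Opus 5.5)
Your proof is correct and follows essentially the same route as the paper: the same difference-of-squares factorization, the bound $2\|\mathbf{h}\|_1$ on the second factor, the bound $2\|\mathbf{h}-\mathbf{h}'_t\|_1$ on the first factor obtained by routing through $\mathbf{b}_t^\top\mathbf{h}'_t=\|\mathbf{h}'_t\|_1$, and finally $\|\cdot\|_1\le\sqrt{N}\|\cdot\|_2$. The only cosmetic difference is that you apply the reverse triangle inequality to $|\mathbf{b}_t^\top\mathbf{h}|$ directly, whereas the paper first lower-bounds it by $\Re\{\mathbf{b}_t^\top\mathbf{h}\}$ and then splits off $\mathbf{b}_t^\top\mathbf{h}'_t$.
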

\begin{proof}
See Appendix \ref{sec:proof_one_slot_mismatching}.
\end{proof}
Lemma \ref{lem:One-slot-Mismatching-Bound} characterizes instantaneous
received-power regret caused by aligning \ac{ris} phases to a sampled
channel $\mathbf{h}'_{t}$. This shows that in each iteration, the
regret can be controlled by the channel mismatch $||\mathbf{h}-\mathbf{h}'_{t}||_{2}$.
Combined with the posterior-sampling property and posterior covariance
contraction, this deterministic mismatch bound leads to the Bayesian
regret decomposition below.
\begin{thm}[\textbf{Bayesian Regret Decomposition}]
\label{thm:Bayesian-regret-decomposition} Under the well-specified
Bayesian observation model in Sections II-III, assume that $S^{2}\triangleq\mathbb{E}[||\mathbf{h}||_{2}^{2}]<\infty$.
The Bayesian regret for the proposed algorithm satisfies
\begin{equation}
\mathcal{R}_{T}\leq4\sqrt{2}NS\sum_{t=1}^{T}\sqrt{\mathbb{E}[\text{tr}(\bm{\Sigma}_{t-1})]}\label{eq:Bayesian_regret_decomposition}
\end{equation}
where the expectation is taken over the prior, the noise, and the
internal randomness of the Thompson sampling.
\end{thm}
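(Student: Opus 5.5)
The plan is to sum the one-slot bound of Lemma~\ref{lem:One-slot-Mismatching-Bound} over $t$, then control each term with Cauchy--Schwarz and the posterior-sampling property. The factor $\sqrt{2}$ will come from the mismatch between two independent draws from the same posterior.

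\textbf{Step 1 (per-slot bound).} Algorithm~\ref{alg:mmse-ts} sets $\mathbf{b}_t$ as the phase-alignment vector of $\mathbf{h}'_t$, and $\mathbf{b}_*$ aligns to $\mathbf{h}$. Lemma~\ref{lem:One-slot-Mismatching-Bound} therefore gives, pathwise,
\[
|\mathbf{b}_*^{\top}\mathbf{h}|^2-|\mathbf{b}_t^{\top}\mathbf{h}|^2\le 4N\|\mathbf{h}\|_2\|\mathbf{h}-\mathbf{h}'_t\|_2 .
\]
Taking expectations and summing over $t$ yields $\mathcal{R}_T\le 4N\sum_{t=1}^T\mathbb{E}[\|\mathbf{h}\|_2\|\mathbf{h}-\mathbf{h}'_t\|_2]$.

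\textbf{Step 2 (Cauchy--Schwarz).} Apply Cauchy--Schwarz to each expectation:
\[
\mathbb{E}[\|\mathbf{h}\|_2\|\mathbf{h}-\mathbf{h}'_t\|_2]\le\sqrt{\mathbb{E}\|\mathbf{h}\|_2^2}\,\sqrt{\mathbb{E}\|\mathbf{h}-\mathbf{h}'_t\|_2^2}=S\sqrt{\mathbb{E}\|\mathbf{h}-\mathbf{h}'_t\|_2^2}.
\]
It remains to show $\mathbb{E}\|\mathbf{h}-\mathbf{h}'_t\|_2^2=2\,\mathbb{E}[\mathrm{tr}(\bm{\Sigma}_{t-1})]$.

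\textbf{Step 3 (posterior sampling identity).} This is the key step.
\begin{itemize}
\item Condition on $\mathcal{F}_{t-1}$. Because the model is well specified with a Gaussian prior and linear Gaussian observations, the recursion in (\ref{eq:iterative=000020mu})--(\ref{eq:iterative=000020sigma}) gives $\mathcal{L}(\mathbf{h}|\mathcal{F}_{t-1})=\mathcal{CN}(\bm{\mu}_{t-1},\bm{\Sigma}_{t-1})$.
\item This holds even though the actions $\mathbf{b}_s$ depend adaptively on past data. The actions are $\mathcal{F}_{s-1}$-measurable up to independent sampling randomness, and that randomness is independent of $\mathbf{h}$ and of the noise. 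Each Bayes update is therefore the standard conjugate update with a known regressor. I would verify this by induction on $t$, conditioning on the sampling randomness as well.
\item Given $\mathcal{F}_{t-1}$, the draw $\mathbf{h}'_t$ is taken from the same law, independently of $\mathbf{h}$.
\end{itemize}
Writing $\mathbf{h}-\mathbf{h}'_t=(\mathbf{h}-\bm{\mu}_{t-1})-(\mathbf{h}'_t-\bm{\mu}_{t-1})$, the cross term has zero conditional mean by this conditional independence. Hence
\[
\mathbb{E}[\|\mathbf{h}-\mathbf{h}'_t\|_2^2\,|\,\mathcal{F}_{t-1}]=2\,\mathrm{tr}(\bm{\Sigma}_{t-1}),
\]
and the tower property gives the unconditional version.

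\textbf{Step 4 (assemble).} Substituting Step~3 into Step~2 and then into Step~1 gives
\[
\mathcal{R}_T\le 4N S\sum_{t=1}^T\sqrt{2\,\mathbb{E}[\mathrm{tr}(\bm{\Sigma}_{t-1})]},
\]
which is (\ref{eq:Bayesian_regret_decomposition}).

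\textbf{Main obstacle.} I expect Step~3 to be the only delicate point. It needs careful justification that adaptive, randomized action selection does not break the Gaussian conjugacy, so that $\bm{\Sigma}_{t-1}$ is exactly the conditional covariance and not merely an MMSE error covariance for fixed designs. Finiteness of $S$ is used only in the Cauchy--Schwarz step.
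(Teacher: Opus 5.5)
Your proposal is correct and follows essentially the same route as the paper: it uses Lemma~\ref{lem:One-slot-Mismatching-Bound}, Cauchy--Schwarz, and the identity $\mathbb{E}[\|\mathbf{h}-\mathbf{h}'_t\|_2^2\mid\mathcal{F}_{t-1}]=2\,\mathrm{tr}(\bm{\Sigma}_{t-1})$, which holds because $\mathbf{h}$ and $\mathbf{h}'_t$ are conditionally i.i.d. posterior draws. The only difference is cosmetic: the paper applies Cauchy--Schwarz first conditionally on $\mathcal{F}_{t-1}$ and then again after taking expectations, whereas you apply it once unconditionally, and both give the same constant $4\sqrt{2}NS$.
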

\begin{proof}
See Appendix \ref{sec:proof_bayesian_regret_decomposition}.
\end{proof}
Theorem \ref{thm:Bayesian-regret-decomposition} shows that the Bayesian
received-power regret of the proposed algorithm is governed by the
contraction of the posterior uncertainty of the effective cascaded
channel. Therefore, the regret analysis reduces to understanding how
fast $\mathbb{E}[\text{tr}(\bm{\Sigma}_{t-1})]$ decreases over time.
We next give explicit conditions under which this contraction is fast
enough to guarantee sublinear Bayesian regret.

\subsection{Condition for Sublinear Bayesian Regret}

We next convert the decomposition (\ref{eq:Bayesian_regret_decomposition})
into explicit sufficient conditions for sublinear Bayesian regret
\begin{cor}[\textbf{Sublinear Bayesian Regret Under Sufficient Information Accumulation}]
\label{cor:Sublinear=000020Bayesian=000020regret} Assume the well-specified
Bayesian observation model and the channel follows Rayleigh fading,
$\mathbf{h}\sim\mathcal{CN}(\mathbf{0},\mathbf{I}_{N})$, so that
$S^{2}\triangleq\mathbb{E}[\|\mathbf{h}\|_{2}^{2}]=N$. Define $\mathbf{G}_{t}\triangleq\sum_{s=1}^{t}\mathbf{b}_{s}^{*}\mathbf{b}_{s}^{\top}$
and suppose that the posterior covariance satisfies $\bm{\Sigma}_{t}=(\mathbf{I}_{N}+\sigma^{-2}\mathbf{G}_{t})^{-1}$.
Let $t_{0}\geq1$ and define $\bar{t}_{0}\triangleq\max\{t_{0},N\}$.
If there exists a constant $C>0$ such that, almost surely for the
realized RIS sequence
\begin{equation}
\lambda_{\text{\ensuremath{\min}}}(\mathbf{G}_{t})\geq C(t-\bar{t}_{0}+1),\hspace*{1em}\forall t\geq\bar{t}_{0}.\label{eq:condition_corollary_3}
\end{equation}
Then,
\begin{equation}
\mathcal{R}_{T}=O(\sqrt{T}),\hspace*{1em}T\rightarrow\infty.
\end{equation}

In particular, when $t_{0}\le N$, the condition reduces to $\lambda_{\min}(\mathbf{G}_{t})\ge C(t-N+1)$
and the first $N$ slots contribute only a finite regret term as $4\sqrt{2}SN^{5/2}$,
equivalently $4\sqrt{2}N^{3}$ under the Rayleigh prior, so that the
$O(\sqrt{T})$ bound remains valid.
\end{cor}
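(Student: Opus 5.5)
The plan is to start from the decomposition in Theorem~\ref{thm:Bayesian-regret-decomposition} with $S=\sqrt{N}$, and to bound $\mathbb{E}[\text{tr}(\bm{\Sigma}_{t-1})]$ separately over two ranges of $t$. For the early slots $t\le\bar{t}_{0}$ we use only the trivial bound. Since $\mathbf{G}_{t}\succeq\mathbf{0}$, we have $\bm{\Sigma}_{t}=(\mathbf{I}_{N}+\sigma^{-2}\mathbf{G}_{t})^{-1}\preceq\mathbf{I}_{N}$, so $\text{tr}(\bm{\Sigma}_{t-1})\le N$. Each of these slots then contributes at most $4\sqrt{2}NS\sqrt{N}$ to the sum. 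In the case $t_{0}\le N$ this totals at most $4\sqrt{2}SN^{5/2}$ over the first $N$ slots, which equals $4\sqrt{2}N^{3}$ when $S=\sqrt{N}$. For general $t_{0}$ the total is at most $4\sqrt{2}SN^{3/2}\bar{t}_{0}$. In both cases this is a constant independent of $T$, which settles the "in particular" clause.

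For the late slots $t-1\ge\bar{t}_{0}$ we use condition (\ref{eq:condition_corollary_3}). The eigenvalues of $\bm{\Sigma}_{t-1}$ are $(1+\sigma^{-2}\lambda_{i}(\mathbf{G}_{t-1}))^{-1}$, so
\begin{equation}
\text{tr}(\bm{\Sigma}_{t-1})\le\frac{N}{1+\sigma^{-2}\lambda_{\min}(\mathbf{G}_{t-1})}\le\frac{N}{1+\sigma^{-2}C(t-\bar{t}_{0})}.
\end{equation}
This bound holds almost surely, so it also bounds the expectation; this is the only place where the almost-sure quantifier matters. The single boundary slot $t=\bar{t}_{0}+1$, where the right-hand side equals $N$, can be absorbed into the constant term. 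Substituting into (\ref{eq:Bayesian_regret_decomposition}) gives
\begin{equation}
\mathcal{R}_{T}\le c_{0}+4\sqrt{2}NS\sqrt{N}\sum_{k=1}^{T}\frac{1}{\sqrt{1+\sigma^{-2}Ck}}\le c_{0}+4\sqrt{2}N^{2}\frac{\sigma}{\sqrt{C}}\sum_{k=1}^{T}k^{-1/2},
\end{equation}
where $c_{0}$ is the constant from the first range. Finally, $\sum_{k=1}^{T}k^{-1/2}\le 2\sqrt{T}$ by comparison with $\int_{0}^{T}x^{-1/2}dx$. This yields $\mathcal{R}_{T}\le c_{0}+8\sqrt{2}N^{2}\sigma C^{-1/2}\sqrt{T}=O(\sqrt{T})$ for fixed $N$, $\sigma^{2}$ and $C$.

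There is no real obstacle; the work is mostly index bookkeeping. The points to check are:
\begin{itemize}
\item the shift between $\bm{\Sigma}_{t-1}$ in the decomposition and $\mathbf{G}_{t}$ in the condition, which only moves the boundary slot into the constant;
\item that the assumed closed form of $\bm{\Sigma}_{t}$ is consistent with the recursion (\ref{eq:iterative=000020sigma}) under $\bm{\Sigma}_{0}=\mathbf{I}_{N}$, so the monotone eigenvalue map applies.
\end{itemize}
One also notes that $\bar{t}_{0}\ge N$ is natural, since $\mathbf{G}_{t}$ has rank at most $t$ and so $\lambda_{\min}(\mathbf{G}_{t})=0$ for $t<N$. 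This is why only the trivial bound is available before slot $N$.
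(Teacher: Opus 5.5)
Your proposal is correct and follows the paper's proof essentially step for step: split the sum in Theorem~\ref{thm:Bayesian-regret-decomposition} at $\bar{t}_{0}$, use $\text{tr}(\bm{\Sigma}_{t-1})\le N$ for the first $\bar{t}_{0}$ slots, then use $\lambda_{\min}(\mathbf{G}_{t-1})\ge C(t-\bar{t}_{0})$ with an integral comparison. The only difference in that last step is cosmetic: you drop the $1$ in the denominator and use $\sum_{k\le T}k^{-1/2}\le2\sqrt{T}$, whereas the paper integrates $(1+\sigma^{-2}Cx)^{-1/2}$ directly. One harmless slip: at $t=\bar{t}_{0}+1$ the bound is $N/(1+\sigma^{-2}C)$, not $N$, and that slot is already the $k=1$ term of your sum, so nothing needs to be absorbed into the constant.
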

\begin{proof}
See Appendix \ref{sec:proof_Sublinear=000020Bayesian=000020regret}.
\end{proof}
Corollary \ref{cor:Sublinear=000020Bayesian=000020regret} shows that
under isotropic Rayleigh fading, sublinear Bayesian regret is guaranteed
if the realized RIS configurations remain informative across all channel
dimensions after a finite initial period. However, in our \ac{ris}
system, the phase configuration simultaneously serves channel estimation
and data transmission, making it difficult to enforce strong exploration
in every slot. Motivated by this, we consider that only require the
accumulated RIS configurations to provide sufficient information over
time and gradually reduce the posterior uncertainty of the cascaded
channel. We next show that this cumulative information requirement
is sufficient for sublinear Bayesian regret and derive a simple sufficient
condition.

To express this information-growth requirement, define the conditional
expected information matrix as $\mathbf{Q}_{t}\triangleq\mathbb{E}[\mathbf{b}_{t}^{*}\mathbf{b}_{t}^{\top}\mid\mathcal{F}_{t-1}]$,
where $\mathbf{b}_{t}^{*}\mathbf{b}_{t}^{\top}$ is the information
contribution of the RIS configuration to the posterior precision matrix
in slot $t$.
\begin{thm}[\textbf{Conditional Sublinear Bayesian Regret}]
 \label{thm:sublinear=000020comm=000020with=000020channel=000020excitation}
Assume that the Bayesian observation model is well specified with
$\mathbf{h}\sim\mathcal{CN}(\mathbf{0},\mathbf{I}_{N})$, and let
$\{\mathbf{b}_{t}\}_{t=1}^{T}$ be generated by the proposed algorithm.
Suppose that there exist constants $c_{\alpha}>0$ and $\alpha\in(1/2,1]$,
and a nonnegative sequence $\{q_{t,\alpha}\}_{t\ge1}$, such that
\begin{equation}
\mathbb{P}\left(\frac{\lambda_{\min}\left(\sum_{s=1}^{t}\mathbf{Q}_{s}\right)}{t^{\alpha}}\le c_{\alpha}\right)\le q_{t,\alpha},\qquad\forall t\ge1.\label{eq:excitation=000020condition}
\end{equation}
Then the cumulative Bayesian received-power regret satisfies
\begin{align}
\mathcal{R}_{T}\le & 4\sqrt{2}N^{3/2}\Bigg(\sqrt{N}+\sum_{t=1}^{T-1}\Bigg(\frac{N}{1+\frac{c_{\alpha}}{2\sigma^{2}}t^{\alpha}}+Nq_{t,\alpha}\nonumber \\
 & \hspace*{1em}\hspace*{1em}+N^{2}\exp(-\frac{c_{\alpha}^{2}}{128N^{2}}t^{2\alpha-1})\Bigg)^{1/2}\Bigg).\label{eq:cumulative_regret_bound_perf_main}
\end{align}
Consequently, if $\sum_{t=1}^{T}\sqrt{q_{t,\alpha}}=o(T)$, then $\mathcal{R}_{T}/T\to0$
as $T\to\infty$.
\end{thm}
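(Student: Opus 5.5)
The plan is to start from Theorem~\ref{thm:Bayesian-regret-decomposition} with $S=\sqrt{N}$, which gives $\mathcal{R}_{T}\le4\sqrt{2}N^{3/2}\sum_{t=1}^{T}\sqrt{\mathbb{E}[\mathrm{tr}(\bm{\Sigma}_{t-1})]}$. The $t=1$ term uses $\bm{\Sigma}_{0}=\mathbf{I}_{N}$, so $\sqrt{\mathrm{tr}(\bm{\Sigma}_{0})}=\sqrt{N}$; this produces the leading $\sqrt{N}$ inside the bracket. For $t\ge2$ we reindex to $t-1\in\{1,\dots,T-1\}$. It then remains to show, for every $t\ge1$,
\[
\mathbb{E}[\mathrm{tr}(\bm{\Sigma}_{t})]\le\frac{N}{1+\frac{c_{\alpha}}{2\sigma^{2}}t^{\alpha}}+Nq_{t,\alpha}+N^{2}\exp\Big(-\frac{c_{\alpha}^{2}}{128N^{2}}t^{2\alpha-1}\Big).
\]
Since $\bm{\Sigma}_{t}=(\mathbf{I}_{N}+\sigma^{-2}\mathbf{G}_{t})^{-1}$, we have $\mathrm{tr}(\bm{\Sigma}_{t})\le N/(1+\sigma^{-2}\lambda_{\min}(\mathbf{G}_{t}))\le N$. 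Define the good event $E_{t}=\{\lambda_{\min}(\mathbf{G}_{t})\ge\frac{c_{\alpha}}{2}t^{\alpha}\}$. On $E_{t}$ the trace is bounded by the first term, and on $E_{t}^{c}$ by $N$. So it suffices to prove $\mathbb{P}(E_{t}^{c})\le q_{t,\alpha}+N\exp(-c_{\alpha}^{2}t^{2\alpha-1}/(128N^{2}))$.

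Split $E_{t}^{c}$ according to whether $\lambda_{\min}(\sum_{s\le t}\mathbf{Q}_{s})\le c_{\alpha}t^{\alpha}$. That event has probability at most $q_{t,\alpha}$ by (\ref{eq:excitation=000020condition}). On its complement, Weyl's inequality gives $\lambda_{\min}(\mathbf{G}_{t})\ge c_{\alpha}t^{\alpha}-\|\mathbf{M}_{t}\|$, where $\mathbf{M}_{t}=\sum_{s\le t}(\mathbf{b}_{s}^{*}\mathbf{b}_{s}^{\top}-\mathbf{Q}_{s})$ is a Hermitian matrix martingale adapted to $\mathcal{F}_{t}$. Therefore $E_{t}^{c}$ requires $\|\mathbf{M}_{t}\|\ge\frac{c_{\alpha}}{2}t^{\alpha}$. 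Each increment has operator norm at most $\|\mathbf{b}_{s}\|^{2}+\|\mathbf{Q}_{s}\|\le2N$, because $|[\mathbf{b}]_{n}|=1$.

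The main step is the concentration bound $\mathbb{P}(\|\mathbf{M}_{t}\|\ge c_{\alpha}t^{\alpha}/2)\le N\exp(-c_{\alpha}^{2}t^{2\alpha-1}/(128N^{2}))$. I would get this from a matrix Azuma inequality (Tropp): for bounded-difference Hermitian martingales with $\mathbf{X}_{s}^{2}\preceq\mathbf{A}_{s}^{2}$, one has $\mathbb{P}(\lambda_{\max}\ge u)\le d\,e^{-u^{2}/(8\varrho^{2})}$, where $\varrho^{2}=\|\sum\mathbf{A}_{s}^{2}\|\le4N^{2}t$. Only the lower tail matters here, so we apply it to $-\mathbf{M}_{t}$ and need only the factor $d=N$. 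With $u=c_{\alpha}t^{\alpha}/2$ this gives exponent $u^{2}/(32N^{2}t)=c_{\alpha}^{2}t^{2\alpha-1}/(128N^{2})$, matching the stated constant. Multiplying this probability by the trace bound $N$ gives the $N^{2}\exp(\cdot)$ term. Then $\sqrt{\cdot}$ is applied, and the pieces are assembled into (\ref{eq:cumulative_regret_bound_perf_main}). One point needs care: $\mathbf{b}_{s}$ depends on internal sampling as well as on $\mathcal{F}_{s-1}$, so the filtration should include the TS randomness up to $s-1$, which keeps $\mathbf{Q}_{s}$ predictable.

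For the consequence, use $\sqrt{a+b+c}\le\sqrt{a}+\sqrt{b}+\sqrt{c}$. The first sum is $\sum_{t}\sqrt{N/(1+ct^{\alpha})}=O(T^{1-\alpha/2})=o(T)$. The second is $\sqrt{N}\sum_{t}\sqrt{q_{t,\alpha}}=o(T)$ by assumption. The third is summable because $2\alpha-1>0$ makes $\exp(-c\,t^{2\alpha-1})$ decay faster than any polynomial. Hence $\mathcal{R}_{T}/T\to0$.
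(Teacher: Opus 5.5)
Your proposal is correct and follows essentially the same route as the paper's proof. The shared steps are Theorem~\ref{thm:Bayesian-regret-decomposition} with $S=\sqrt{N}$ and the $t=1$ term split off, then Weyl's inequality combining the excitation condition with a one-sided matrix Azuma bound (increments bounded by $2N$, variance proxy $4N^{2}t$, $u=\frac{c_{\alpha}}{2}t^{\alpha}$), the trivial bound $\mathrm{tr}(\bm{\Sigma}_{t})\le N$ off the good event, and finally the $\sqrt{a+b+c}$ split for the consequence. One thing to tidy: state the concentration step directly for the lower-tail event $\{\lambda_{\min}(\mathbf{M}_{t})\le-\frac{c_{\alpha}}{2}t^{\alpha}\}$, as you eventually do, rather than for $\|\mathbf{M}_{t}\|$, because a two-sided norm bound would carry a factor $2N$ instead of $N$.
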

\begin{proof}
See Appendix \ref{sec:proof=000020sublinear=000020comm=000020with=000020channel=000020excitation}.
\end{proof}
Theorem \ref{thm:sublinear=000020comm=000020with=000020channel=000020excitation}
gives a cumulative information-growth condition under which sublinear
Bayesian regret follows. Specifically, it requires the minimum eigenvalue
of the conditional expected information accumulated by the RIS configurations
to grow at least polynomially with exponent $\alpha>1/2$. Because
this property is determined by the policy-induced posterior distribution,
which has no closed-form expression, we evaluate it through a finite-horizon
simulation.

Under the Rayleigh fading setting $\mathbf{h}\sim\mathcal{CN}(\mathbf{0},\mathbf{I}_{N})$,
we run the MMSE-based Thompson-sampling algorithm. Since $\mathbf{Q}_{t}$
is a conditional expectation over the posterior-sampling randomness
given $\mathcal{F}_{t-1}$, we approximate it along each adaptive
trajectory by the Monte Carlo average $\widehat{\mathbf{Q}}_{t}=\frac{1}{\Upsilon}\sum_{\ell=1}^{\Upsilon}\mathbf{b}_{t,\ell}^{*}\mathbf{b}_{t,\ell}^{\top}$,
where $\Upsilon$ is the number of Monte Carlo samples. Each $\mathbf{b}_{t,\ell}$
is obtained by drawing an independent posterior sample from $\mathcal{CN}(\bm{\mu}_{t-1},\bm{\Sigma}_{t-1})$
and then applying the phase-alignment rule in (\ref{eq:11-3}). The
estimate $\widehat{\mathbf{Q}}_{t}$ is used only to diagnose cumulative
information growth along simulated horizon.

\selectlanguage{american}%
\begin{figure}
\begin{centering}
\includegraphics[width=0.8\columnwidth]{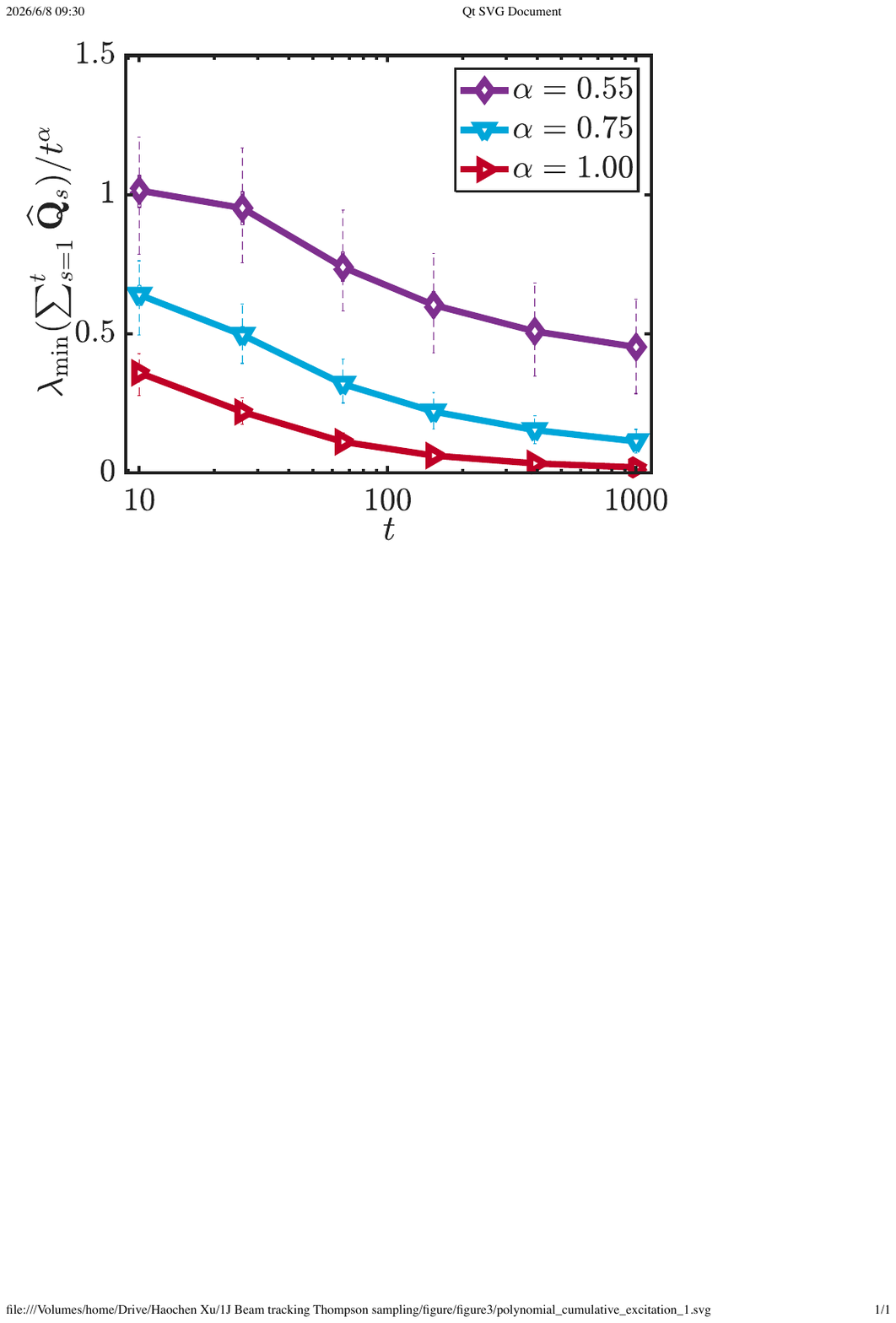}
\par\end{centering}
\caption{\protect\label{fig:tho2_validation_1}\foreignlanguage{english}{Cumulative
information growth for the Thompson sampling policy under Rayleigh
fading. The results provide finite-horizon numerical evidence that
the cumulative information measure remains positive for choices $(\alpha,c_{\alpha})=(0.55,0.2),(0.75,0.1),(1,0.02)$,
supporting the sufficient condition in Theorem \ref{thm:sublinear=000020comm=000020with=000020channel=000020excitation}.}}
\end{figure}

\selectlanguage{english}%
Fig. \ref{fig:tho2_validation_1} reports the cumulative information
growth of the Thompson-sampling policy under Rayleigh fading over
$t=[10,\dots,1000]$. For each exponent $\alpha\in\{0.55,0.75,1\}$,
we plot the normalized cumulative information measure $\widehat{\Lambda}_{t}/t^{\alpha}$
where $\widehat{\Lambda}_{t}\triangleq\lambda_{\min}\left(\sum_{s=1}^{t}\widehat{\mathbf{Q}}_{s}\right)$.
The results show that cumulative information remains above positive
constants $c_{\alpha}$ for the considered pairs $\ensuremath{(\alpha,c_{\alpha})=(0.55,0.2)},\ensuremath{(0.75,0.1)},$
and $\ensuremath{(1,0.02)}.$ Thus, this experiment indicates that
the Thompson-sampling policy can generate the cumulative information
growth required by $(\ref{eq:excitation=000020condition})$ over the
horizon, thereby supporting the sufficient condition for sublinear
Bayesian regret.

\section{Bayesian Bandit Exploiting Sparsity in Hybrid Near/Far-field Propagation}

In this section, we extend the posterior-driven RIS control framework
to exploit sparsity in hybrid near-/far-field propagation. The extension
preserves the Bayesian bandit formulation, but represents the cascaded
channel over a structured angle-distance dictionary. Thompson sampling
is then performed in the coefficient domain, with sparse Bayesian
learning used to maintain a sparsity-aware posterior. The main design
challenge is to construct a dictionary that captures hybrid-field
steering responses while remaining compact enough for efficient posterior
updates and bandit beamforming.

Although several dictionary design methods exist in the literature,
such as \cite{columncovariance,3dB_coherence}, these dictionaries
are {\em not} specifically designed to accelerate bandit beamforming
with sparse channels. To fill this gap, this section considers a general
hybrid near/far-field propagation scenario, where some scatterers
may lie in the near field of the RIS and others may lie in the far
field. Because the bandit reward is the effective rate in (\ref{eq:effective_rate}),
we propose an {\em energy focusing} criterion for the dictionary
design. 

\subsection{The Hybrid Near/Far-field Channel}

\selectlanguage{american}%
This subsection focuses on the scenario where the cascaded channel
is dominated by a small number of strong propagation paths. Recall
the\foreignlanguage{english}{ \ac{ris} model under the \ac{ula}
setting with $N$ reflecting elements spaced by $d=\lambda/2$ for
simplicity. For a scatterer at distance $r$ and physical angle $\theta$,
we define the normalized angular parameter as $\vartheta=\sin\theta\in[-1,1]$.
The near-field steering vector is given by \cite{columncovariance}
\begin{equation}
\mathbf{s}(\vartheta,r)=\frac{1}{\sqrt{N}}[e^{j\pi\phi_{0}(\vartheta,r)},e^{j\pi\phi_{1}(\vartheta,r)},...,e^{j\pi\phi_{N-1}(\vartheta,r)}]^{\top}\label{eq:steering=000020vector}
\end{equation}
where
\begin{equation}
\phi_{n}(\vartheta,r)\triangleq-n\vartheta+\frac{n^{2}d}{2r}(1-\vartheta^{2}),\hspace*{1em}n=0,1,\dots,N-1.\label{eq:fresnel_approximation}
\end{equation}
}

\selectlanguage{english}%
While the above formula also applies to the far-field case, the steering
vector (\ref{eq:steering=000020vector}) reduces to
\begin{equation}
\mathbf{s}(\vartheta,r=\infty)=\frac{1}{\sqrt{N}}[1,e^{-j\pi\vartheta},...,e^{-j\pi(N-1)\vartheta}]^{\top}\label{eq:near_field=000020steering=000020vector}
\end{equation}
which is asymptotically accurate at large $r\gg Z\triangleq2N^{2}d^{2}/\lambda=N^{2}d$,
where $Z$ represents the Rayleigh distance.

Hence, the multipath equivalent concatenated channel from the BS to
the user via the $N$-element RIS is given by
\begin{equation}
\mathbf{h}=\Big(\sum_{i}\alpha_{i}\mathbf{s}(\vartheta_{\text{b},i},r_{\text{b,}i})\Big)\odot\Big(\sum_{j}\beta_{j}\mathbf{s}(\vartheta_{\text{u},j},r_{\text{u,}j})\Big)\label{eq:multipath-equiv-chann}
\end{equation}
where the term $\sum_{i}\alpha_{i}\mathbf{s}(\vartheta_{\text{b},i},r_{\text{b,}i})$
represents the narrowband multipath channel for the BS-RIS link with
$\alpha_{i}$ being the gain of the path $i$ from a scatter in $\vartheta_{\text{b},i}$
at distance $r_{\text{b,}i}$, the term $\sum_{j}\beta_{j}\mathbf{s}(\vartheta_{\text{u},j},r_{\text{u,}j})$
represents that for the RIS-user link, and the operator $\odot$ represents
the Hadamard product.

Let $\nu\triangleq\vartheta_{\mathrm{b}}+\vartheta_{\mathrm{u}}$
be the cascaded angular parameter. Although $\nu$ can lie in $[-2,2]$,
we have
\begin{equation}
\vartheta_{{\rm eq}}=\begin{cases}
\nu+2, & \nu<-1,\\
\nu, & -1\le\nu\le1,\\
\nu-2, & \nu>1
\end{cases}\label{eq:folding}
\end{equation}
that $\vartheta_{{\rm eq}}\in[-1,1]$ as the folded representative
of $\nu$ \cite{csris}. We show in the following result that the
multipath equivalent channel can be represented as a linear combination
of vectors of the form $\mathbf{s}(\vartheta,r)$ in (\ref{eq:steering=000020vector}).
\begin{prop}[\textbf{Unified Representation}]
 \label{prop:unified-dict-vector} The steering vector of the corresponding
equivalent cascaded RIS channel can be written as
\begin{equation}
\mathbf{s}(\vartheta_{\mathrm{b}},r_{\mathrm{b}})\odot\mathbf{s}(\vartheta_{\mathrm{u}},r_{\mathrm{u}})=\frac{1}{\sqrt{N}}\mathbf{s}(\vartheta_{{\rm eq}},r_{{\rm eq}})\label{eq:unified_half_wavelength_product}
\end{equation}
where the corresponding effective range is
\begin{equation}
r_{{\rm eq}}=\frac{1-\vartheta_{{\rm eq}}^{2}}{(1-\vartheta_{\mathrm{b}}^{2})/r_{\mathrm{b}}+(1-\vartheta_{\mathrm{u}}^{2})/r_{\mathrm{u}}}.\label{eq:req_positive_visible}
\end{equation}
If the denominator in (\ref{eq:req_positive_visible}) is zero, we
set $r_{{\rm eq}}=\infty$. Consequently, if $|\vartheta_{\text{eq}}|<1$,
the equivalent cascaded RIS channel can be represented as
\begin{equation}
\mathbf{h}=\sum_{k}\frac{c_{k}}{\sqrt{N}}\mathbf{s}(\vartheta_{k},r_{k})\label{eq:unified_sparse_cascaded_channel}
\end{equation}
where $k$ indexes those BS-RIS/RIS-user path pairs and $c_{k}$ absorbs
the corresponding composite path gain.
\end{prop}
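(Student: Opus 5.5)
The plan is a direct entrywise computation: multiply the two steering vectors element by element, add their phases, and show that the resulting phase has exactly the form $\pi\phi_{n}(\vartheta_{\rm eq},r_{\rm eq})$ up to an integer multiple of $2\pi$.

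\textbf{Step 1 (phase sum).} By (\ref{eq:steering=000020vector}), the $n$-th entry of $\mathbf{s}(\vartheta_{\mathrm{b}},r_{\mathrm{b}})\odot\mathbf{s}(\vartheta_{\mathrm{u}},r_{\mathrm{u}})$ equals $\frac{1}{N}e^{j\pi[\phi_{n}(\vartheta_{\mathrm{b}},r_{\mathrm{b}})+\phi_{n}(\vartheta_{\mathrm{u}},r_{\mathrm{u}})]}$. Using (\ref{eq:fresnel_approximation}), the exponent is
\[
\pi\Big[-n\nu+\frac{n^{2}d}{2}\Big(\frac{1-\vartheta_{\mathrm{b}}^{2}}{r_{\mathrm{b}}}+\frac{1-\vartheta_{\mathrm{u}}^{2}}{r_{\mathrm{u}}}\Big)\Big],
\]
with $\nu=\vartheta_{\mathrm{b}}+\vartheta_{\mathrm{u}}$.

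\textbf{Step 2 (folding the linear term).} By (\ref{eq:folding}), $\nu=\vartheta_{\rm eq}+2m$ with $m\in\{-1,0,1\}$. Hence $e^{-j\pi n\nu}=e^{-j\pi n\vartheta_{\rm eq}}e^{-j2\pi nm}=e^{-j\pi n\vartheta_{\rm eq}}$, since $nm$ is an integer. This is the only place where $d=\lambda/2$ matters: it is what makes the linear phase coefficient $\pi$, so the alias shift is a multiple of $2\pi$.

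\textbf{Step 3 (matching the quadratic term).} We need $\frac{n^{2}d}{2r_{\rm eq}}(1-\vartheta_{\rm eq}^{2})$ to equal the quadratic term obtained in Step 1. Solving for $r_{\rm eq}$ gives exactly (\ref{eq:req_positive_visible}). When the denominator vanishes, the quadratic term is zero, which matches the convention $r_{\rm eq}=\infty$ used in (\ref{eq:near_field=000020steering=000020vector}). Putting Steps 1–3 together, each entry equals $\frac{1}{\sqrt{N}}\cdot\frac{1}{\sqrt{N}}e^{j\pi\phi_{n}(\vartheta_{\rm eq},r_{\rm eq})}$. This proves (\ref{eq:unified_half_wavelength_product}).

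\textbf{Step 4 (expanding the multipath channel).} Expand the Hadamard product in (\ref{eq:multipath-equiv-chann}) bilinearly:
\[
\mathbf{h}=\sum_{i,j}\alpha_{i}\beta_{j}\,\mathbf{s}(\vartheta_{\mathrm{b},i},r_{\mathrm{b},i})\odot\mathbf{s}(\vartheta_{\mathrm{u},j},r_{\mathrm{u},j}).
\]
Apply (\ref{eq:unified_half_wavelength_product}) to each pair $k=(i,j)$ and set $c_{k}=\alpha_{i}\beta_{j}$; this yields (\ref{eq:unified_sparse_cascaded_channel}).

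\textbf{Main obstacle.} The delicate point is the side condition $|\vartheta_{\rm eq}|<1$. At $|\vartheta_{\rm eq}|=1$ the factor $1-\vartheta_{\rm eq}^{2}$ vanishes, so no finite $r_{\rm eq}$ can reproduce a nonzero quadratic term. I will state this case separately, noting that it is excluded by the hypothesis.

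A second check is that $r_{\rm eq}>0$ whenever the pair is a genuine range-angle pair, so that it lies in the dictionary's parameter set. This holds because both terms in the denominator are nonnegative for $|\vartheta|\le1$ and $r>0$, and $1-\vartheta_{\rm eq}^{2}>0$ when $|\vartheta_{\rm eq}|<1$.
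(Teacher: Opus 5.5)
Your proposal is correct and follows the paper's proof essentially step for step. Both arguments add the phases entrywise, absorb the fold $\vartheta_{\mathrm{eq}}-\nu\in\{-2,0,2\}$ into a multiple of $2\pi$, match the quadratic coefficient $\kappa$ to $(1-\vartheta_{\mathrm{eq}}^{2})/r_{\mathrm{eq}}$ (with $\kappa=0$ giving $r_{\mathrm{eq}}=\infty$), and then expand the Hadamard product over path pairs with $c_k=\alpha_i\beta_j$. Your explicit flagging of the degenerate case $|\vartheta_{\mathrm{eq}}|=1$ with $\kappa>0$, and your check that $r_{\mathrm{eq}}>0$, are small additions that the paper leaves implicit.
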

\begin{proof}
See Appendix \ref{sec:unified-dict-vector-proof}.
\end{proof}

\subsection{Supporting Dictionary via Energy Focusing\protect\label{subsec:dictionary}}

\subsubsection{Motivation and Design Constraints}

For a large RIS system with many reflecting elements, the hybrid-field
channel is usually dominated by a small number of scatterers, so the
representation in (\ref{eq:multipath-equiv-chann}) contains only
a few dominant terms. We therefore seek a finite angle-distance dictionary
$\{\mathbf{s}(\vartheta_{k},r_{k})\}_{k=1}^{M}$ such that $\hat{\mathbf{h}}=\sum_{k=1}^{M}\varsigma_{k}\mathbf{s}(\vartheta_{k},r_{k})$
approximates $\mathbf{h}$ with a sparse coefficient vector $\bm{\varsigma}$.

The dictionary is designed for Bayesian bandit rather than for constructing
an orthogonal channel basis. The \ac{dft} dictionary can span $\mathbf{h}$,
but near-field and off-grid components can spread over many \ac{dft}
atoms, leading to a high-dimensional and less sparse coefficient posterior.
We therefore cover the hybrid-field steering manifold with angle-distance
atoms so that a steering response $\mathbf{s}(\vartheta,r)$ can be
approximated within a controlled focusing loss. This is similar in
spirit to vector-quantization codebook design, except that the cascaded
channel is represented by a sparse linear combination of a few atoms
rather than by a single codeword. This leads to a design trade-off:
increasing the number of atoms $M$ can reduce the representation
error, but an overly dense dictionary increases mutual coherence and
the cost of \ac{sbl}-based posterior updates, thereby slowing channel
learning and bandit beam adaptation.

\subsubsection{Atom Placement via Uniform Power-Loss Boundaries}

To make this focusing-loss criterion explicit, we introduce auxiliary
atoms and place dictionary atoms using uniform power-loss boundaries.
For an atom $(\vartheta_{m},r_{m})$ and an auxiliary atom $(\vartheta_{j},r_{j})$,
the auxiliary atom marks the boundary of the local representation,
and the parameter $\delta$ is the normalized power response between
these two atoms. After the auxiliary atom is determined, the next
atom is placed on the other side of this boundary such that the same
auxiliary point also satisfies the $\ensuremath{\delta}$ power condition
with respect to the new atom. Conventional designs mainly minimize
mutual coherence \cite{columncovariance,3dB_coherence}, whereas this
criterion directly controls the power loss relevant to pilot and data
beamforming.

In the following, we design an efficient approach to place steering
atoms. According to (\ref{eq:steering=000020vector}) and (\ref{eq:multipath-equiv-chann}),
for atom at $(\vartheta_{m},r_{m})$ and the auxiliary atom $(\vartheta_{j},r_{j})$,
the normalized power response is given by
\begin{align}
G( & \vartheta_{m},r_{m},\vartheta_{j},r_{j})\nonumber \\
 & =\big|\mathbf{s}(\vartheta_{m},r_{m})^{\text{H}}\mathbf{s}(\vartheta_{j},r_{j})\big|^{2}\nonumber \\
 & =\Big|\frac{1}{N}\sum_{n=0}^{N-1}e^{j\pi\big(\phi_{n}(\vartheta_{m},r_{m})-\phi_{n}(\vartheta_{j},r_{j})\big)}\Big|^{2}\nonumber \\
 & =\Big|\frac{1}{N}\sum_{n=0}^{N-1}e^{j\pi\big(n(\vartheta_{j}-\vartheta_{m})+\frac{n^{2}d}{2}(\frac{1-\vartheta_{m}^{2}}{r_{m}}-\frac{1-\vartheta_{j}^{2}}{r_{j}})\big)}\Big|^{2}\label{eq:3dB=000020gain}
\end{align}
where $\phi_{n}(\cdot)$ is given in (\ref{eq:fresnel_approximation}).
The energy focusing criterion sets $G(\vartheta_{m},r_{m},\vartheta_{j},r_{j})=\delta$
between $(\vartheta_{m},r_{m})$ and $(\vartheta_{j},r_{j})$. If
we first enforce $(1-\vartheta_{m}^{2})/r_{m}=(1-\vartheta_{j}^{2})/r_{j}$
in (\ref{eq:3dB=000020gain}), which leads to
\begin{equation}
\frac{1-\vartheta_{m}^{2}}{r_{m}}=\frac{1-\vartheta_{j}^{2}}{r_{j}}\label{eq:same_ring}
\end{equation}
the energy focusing condition from (\ref{eq:3dB=000020gain}) becomes
\begin{align}
 & G(\vartheta_{m},r_{m},\vartheta_{j},r_{j})=\bigg|\frac{1}{N}\sum_{n=0}^{N-1}e^{j\pi n(\vartheta_{j}-\vartheta_{m})}\bigg|^{2}=\delta.\label{eq:gain_angle}
\end{align}
Note that this is the same form as the far-field power response. The
quantity obtained from (\ref{eq:gain_angle}) is the angular offset
from a atom to its auxiliary $\delta$ power boundary, rather than
the direct spacing between two adjacent dictionary atoms. For example,
for $\delta=0.5$, we have $\Delta\vartheta_{\mbox{{\scriptsize 3dB}}}=\vartheta_{m}-\vartheta_{j}\approx0.443\lambda/(Nd)=0.886/N$
for $d=\lambda/2$. 

For the range sampling under a fixed angle $\vartheta_{m}=\vartheta_{j}$,
the energy focusing condition from (\ref{eq:3dB=000020gain}) becomes
\begin{align}
\delta & =\bigg|\frac{1}{N}\sum_{n=0}^{N-1}\text{exp}\left\{ j\frac{\pi n^{2}d}{2}(1-\vartheta_{m}^{2})\Big(\frac{1}{r_{m}}-\frac{1}{r_{j}}\Big)\right\} \bigg|^{2}\nonumber \\
 & \approx|\frac{C(\beta_{\delta})+jS(\beta_{\delta})}{\beta_{\delta}}|^{2}\label{eq:range-equation}
\end{align}
where
\begin{align}
\beta_{\delta}^{2} & \triangleq\frac{N^{2}d}{4}(1-\vartheta_{m}^{2})\left|\frac{1}{r_{m}}-\frac{1}{r_{j}}\right|=\frac{Z}{4}(1-\vartheta_{m}^{2})\left|\frac{1}{r_{m}}-\frac{1}{r_{j}}\right|,\label{eq:range-equation-2}\\
C(\beta_{\delta}) & =\int_{0}^{\beta_{\delta}}\cos\left(\frac{\pi}{2}x^{2}\right)dx,\quad S(\beta_{\delta})=\int_{0}^{\beta_{\delta}}\sin\left(\frac{\pi}{2}x^{2}\right)dx.
\end{align}
Note that the approximation (\ref{eq:range-equation}) is obtained
from \cite{columncovariance}. Here, the equation (\ref{eq:range-equation})
is independent of the angle $\vartheta$ and range $r$. Thus, it
can be easily solved using a numerical search through $\beta_{\delta}$.
For example, for $\delta=0.5$, it is found that $\beta_{3\mathrm{dB}}\approx1.32$.
In addition, the outermost energy boundary relative to the far-field
atom can be found by setting $r=\infty$, which leads to $r_{\max}=r_{j}=\frac{Z}{4\beta_{3\mathrm{dB}}^{2}}(1-\vartheta_{m}^{2})$. 

Specifically, for the dictionary construction, we have the following
two directions. In the sampling rules below, the index $m$ denotes
a sampled angle and $q$ denotes a range center associated with that
angle.

\selectlanguage{american}%
\begin{figure}
\begin{centering}
\includegraphics[width=0.7\columnwidth]{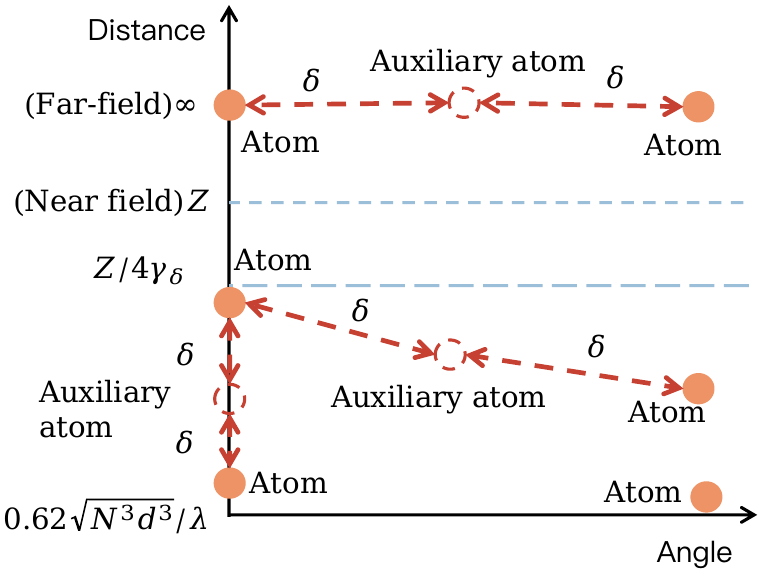}
\par\end{centering}
\caption{\protect\label{fig:codebook}Illustration of the proposed angle-distance
codebook for near-field and far-field, where auxiliary atoms define
uniform power-loss boundaries between adjacent atoms.}
\end{figure}

\selectlanguage{english}%
\begin{itemize}
\item \textbf{Angular-first Sampling:} Starting from $\vartheta=0$, we
sample the angular domain by first identifying the auxiliary $\delta$
power boundary according to (\ref{eq:gain_angle}) and then placing
the next dictionary atom such that the auxiliary atom is shared as
the common boundary of two neighboring atoms. Denote $\mathcal{U}=\{\vartheta_{m}\}$
as the resulting set of dictionary angles. Then, for each angle $\vartheta_{m}\in\mathcal{U}$,
range sampling is performed from the far-field atom $r=\infty$ toward
the prescribed near-field lower range $r_{m,\min}=0.62\sqrt{N^{3}d^{3}/\lambda}$.
The range centers are generated by first solving $\beta_{\delta}$
from (\ref{eq:range-equation}) and then using the inverse-range relation
in (\ref{eq:range-equation-2}). Equivalently, define the $\ensuremath{\mu_{m,q}\triangleq(1-\vartheta_{m}^{2})/r_{m,q}}$
and $\kappa_{\delta}\triangleq4\beta_{\delta}^{2}$. From $(\ref{eq:range-equation-2})$,
an auxiliary atom is located $\kappa_{\delta}/Z$ away from an atom
in the $\mu$ domain. Since two neighboring atoms share the same boundary,
the atom-to-atom spacing is $2\kappa_{\delta}/Z$. Starting from the
far-field atom $\mu_{m,0}=0$, the finite-range centers are placed
at $\mu_{m,q}=2q\kappa_{\delta}/Z$ where $q\ge1$ until the lower
range support is reached, and the set of resulting atoms is $\mathcal{A}^{\text{AF}}=\{\mathbf{s}(\vartheta_{m},r_{m,q})\}_{q\ge0}$
with $q=0$ corresponding to $r=\infty$. 
\item \textbf{Range-first Sampling: }For a fixed $\vartheta_{0}=0$, perform
range sampling from $r_{0,0}=\infty$, towards $r_{0,\min}=0.62\sqrt{N^{3}d^{3}/\lambda}$
according to (\ref{eq:range-equation}). Specifically, for a dictionary
atom $\mathbf{s}(\vartheta_{0},r_{0,q})$, an auxiliary range atom
$\widetilde{r}_{0,q}$ is first determined to satisfy $G(\vartheta_{0},r_{0,q},\vartheta_{0},\widetilde{r}_{0,q})=\delta$.
The next range atom $r_{0,q+1}$ is then placed such that $G(\vartheta_{0},r_{0,q+1},\vartheta_{0},\widetilde{r}_{0,q})=\delta$.
Hence, $\widetilde{r}_{0,q}$ serves as the common $\delta$ power
boundary between two neighboring range atoms. The anchor range set
is first obtained as $\mathcal{R}_{0}=\{r_{0,q}\}_{q\ge0}$, where
$r_{0,0}=\infty$. Then, for each anchor range $r_{0,q}$, joint angle-range
sampling $(\vartheta_{m,q},r_{m,q})$ is performed from the anchor
angle $\vartheta_{0,q}=0$ toward $\vartheta=\pm1$ according to (\ref{eq:same_ring}).
Collecting all generated samples gives the range-first dictionary
$\mathcal{A}^{{\rm RF}}=\{\mathbf{s}(\vartheta_{m,q},r_{m,q})\}_{m,q}.$
\end{itemize}
After collecting the generated angle-range atoms, we relabel them
by a single atom index $k=1,2,\ldots,M$ and form $\mathbf{A}\in\mathbb{C}^{N\times M}$
with columns $\{\mathbf{s}(\vartheta_{k},r_{k})\}_{k=1}^{M}$ generated
by either the angular-first sampling or range-first sampling approach.
Under the adopted sparse dictionary model, we represent the channel
in the dictionary domain as $\mathbf{h}=\mathbf{A}\mathbf{w}$. We
have the following results for the two sampling approaches.
\begin{prop}[\textbf{Sampling-Order Equivalence}]
\label{prop:ring-angule}The two sampling orders generate the same
angle-range dictionaries $\mathcal{A}^{\text{AF}}=\mathcal{A}^{\text{RF}}$
over the same angular grid $\mathcal{U}$, power boundary $\delta$,
and the same system parameters $N$, $d$, and $\lambda$.
\end{prop}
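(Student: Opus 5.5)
Both constructions are described most simply in the coordinates $(\vartheta,\mu)$ with $\mu\triangleq(1-\vartheta^{2})/r$. I will show that, in these coordinates, each procedure outputs the same Cartesian product set $\{(\vartheta_{m},\mu_{q})\}$. The key observation comes from (\ref{eq:3dB=000020gain}): the power response $G$ depends on the two atoms only through the angular difference $\vartheta_{j}-\vartheta_{m}$ and the curvature difference $\mu_{m}-\mu_{j}$. Both the angular rule (\ref{eq:gain_angle}) and the range rule (\ref{eq:range-equation})--(\ref{eq:range-equation-2}) are therefore translation-invariant spacing rules in a single coordinate. The angular grid $\mathcal{U}$ is obtained from (\ref{eq:gain_angle}) starting at $\vartheta=0$, and its spacing does not depend on $\mu$.

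\textbf{Step 1 (angular-first).} For each $\vartheta_{m}\in\mathcal{U}$, the construction explicitly gives $\mu_{m,q}=2q\kappa_{\delta}/Z$ for $q\ge0$, stopping at the lower range support. The value $\kappa_{\delta}=4\beta_{\delta}^{2}$ comes from (\ref{eq:range-equation}), which is independent of $\vartheta$ and $r$; this is noted right after (\ref{eq:range-equation-2}). Hence $\mathcal{A}^{\text{AF}}$ corresponds to $\{(\vartheta_{m},2q\kappa_{\delta}/Z):\vartheta_{m}\in\mathcal{U},\,q\ge0\}$, truncated by the support constraint.

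\textbf{Step 2 (range-first).} At $\vartheta_{0}=0$, the anchor recursion has an auxiliary atom at distance $\kappa_{\delta}/Z$ in $\mu$ and the next atom at a further $\kappa_{\delta}/Z$. This yields the same $\mu_{0,q}=2q\kappa_{\delta}/Z$. Next, for each anchor, the sweep toward $\vartheta=\pm1$ is carried out along the curve (\ref{eq:same_ring}), i.e. at constant $\mu=\mu_{0,q}$. On that curve, $G$ reduces exactly to (\ref{eq:gain_angle}). The angular steps are therefore identical to those of the angular-first procedure started at $\vartheta=0$, and they produce the same grid $\mathcal{U}$. So $\mathcal{A}^{\text{RF}}$ corresponds to $\{(\vartheta_{m},\mu_{0,q})\}$, which is the same set. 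Mapping back via $r=(1-\vartheta^{2})/\mu$, with $\mu=0\leftrightarrow r=\infty$, this map is a bijection for $|\vartheta|<1$, so the steering vectors in the two dictionaries coincide.

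\textbf{Main obstacle.} The step I expect to be delicate is the truncation at the lower range support. If $r_{m,\min}=0.62\sqrt{N^{3}d^{3}/\lambda}$ is independent of $\vartheta_{m}$, then in $\mu$ coordinates the cutoff $\mu\le(1-\vartheta_{m}^{2})/r_{\min}$ varies with the angle. I would handle this by interpreting the range-first construction as stopping each constant-$\mu$ ring sweep once the induced range $r=(1-\vartheta^{2})/\mu$ drops below $r_{\min}$, or equivalently by applying the same admissible region $\{(\vartheta,r):r\ge r_{\min}\}$ to both procedures. With that, both outputs equal the intersection of the common lattice with the same region. A secondary point is the endpoint $|\vartheta|=1$, where $\mu=0$ for all $r$; I would exclude it or treat it as the far-field atom, consistent with the condition $|\vartheta_{\rm eq}|<1$ in Proposition~\ref{prop:unified-dict-vector}.
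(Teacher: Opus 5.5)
Your proposal is correct and follows essentially the same route as the paper. Both pass to the inverse-range coordinate $\mu=(1-\vartheta^{2})/r$, show that the range rule is an angle-independent arithmetic progression $\mu_{q}=2q\kappa_{\delta}/Z$, and use the fact that the range-first ring sweep keeps $\mu$ constant, so both constructions yield the same $(\vartheta_m,\mu_q)$ pairs. You are slightly more complete than the paper in two places. The paper takes the common grid $\mathcal{U}$ as given and only compares ``retained index pairs,'' whereas you argue, via the translation invariance of (\ref{eq:gain_angle}) on a constant-$\mu$ ring, that the ring sweep actually reproduces $\mathcal{U}$, and you make the lower-range truncation consistent explicitly (under the literal stopping rules it would differ between the two procedures). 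Conversely, the paper adds a check that the auxiliary $\delta$-boundaries coincide, which you rightly omit because it is not needed for equality of the atom sets.
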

\begin{proof}
See Appendix \ref{sec:proof-ring-ang}.
\end{proof}

\subsection{Implicit Sparse Channel Learning and Phase Configuration}

We adopt \ac{sbl} to estimate the channel $\mathbf{h}=\mathbf{A}\mathbf{w}$
where $\mathbf{A}$ is the proposed energy-focusing dictionary and
$\mathbf{w}\in\mathbb{C}^{M}$ is a sparse vector whose non-zero elements
correspond to the true channel. With the dictionary $\mathbf{A}$,
the received signal in (\ref{eq:received=000020signal}) can be written
as
\begin{align}
\mathbf{y}_{1:t} & =\mathbf{B}_{1:t}^{\top}\mathbf{A}\mathbf{w}+\bm{\xi}_{1:t}=\bm{\Phi}_{1:t}\mathbf{w}+\bm{\xi}_{1:t}
\end{align}
where $\bm{\Phi}_{1:t}\triangleq\mathbf{B}_{1:t}^{\top}\mathbf{A}$
and
\begin{equation}
\bm{\Phi}_{1:t}=[\bm{\phi}_{1}^{\top},\bm{\phi}_{2}^{\top},\dots,\bm{\phi}_{t}^{\top}]^{\top},\quad\bm{\phi}_{t}^{\top}\triangleq\mathbf{b}_{t}^{\top}\mathbf{A},\quad\bm{\phi}_{t}\in\mathbb{C}^{M}.
\end{equation}
Then the scalar observation in slot $t$ is
\begin{equation}
y_{t}=\bm{\phi}_{t}^{\top}\mathbf{w}+\xi_{t}.\label{eq:linear_sparse_model}
\end{equation}
Denoting $\varpi=\sigma^{-2}$, the likelihood of $\mathbf{y}_{1:t}$
is
\begin{equation}
p(\mathbf{y}_{1:t}|\mathbf{w},\varpi)=\mathcal{CN}(\bm{\Phi}_{1:t}\mathbf{w},\varpi^{-1}\mathbf{I}_{t}).\label{eq:bayy-1-1}
\end{equation}
We impose a Gamma distribution on the parameter $\varpi$, given by
\begin{equation}
p(\varpi)=\Gamma(\varpi|1+a,b)=\Gamma(1+a)^{-1}b^{a+1}\varpi^{a}e^{-b\varpi}.
\end{equation}
We also impose a sparsity-promoting prior on the variable $\mathbf{w}$
as
\begin{equation}
p(\mathbf{w}|\bm{\gamma})=\mathcal{CN}(\mathbf{0},\text{diag}(\bm{\gamma}^{-1}))
\end{equation}
where $\bm{\gamma}=[\gamma_{1},\gamma_{2},...,\gamma_{M}]^{\top}$,
whose entries have Gamma hyperpriors as
\begin{equation}
p(\bm{\gamma};c,d)=\prod_{m=1}^{M}\Gamma(\gamma_{m}|1+c,d).\label{eq:gamma-1-1-1}
\end{equation}

For fixed hyperparameters $(\text{diag}(\bm{\gamma}),\varpi)$, the
posterior remains Gaussian $p(\mathbf{w}\mid\mathbf{y}_{1:t},\bm{\gamma},\varpi)=\mathcal{CN}(\bm{\mu}_{\mathbf{w},t},\mathbf{\Sigma}_{\mathbf{w},t})$
with
\begin{equation}
\mathbf{\Sigma}_{\mathbf{w},t}=\left(\varpi\bm{\Phi}_{1:t}^{\mathrm{H}}\bm{\Phi}_{1:t}+\text{diag}(\bm{\gamma})\right)^{-1},\label{eq:sparse_posterior_cov_batch}
\end{equation}
\begin{equation}
\bm{\mu}_{\mathbf{w},t}=\varpi\mathbf{\Sigma}_{\mathbf{w},t}\bm{\Phi}_{1:t}^{\mathrm{H}}\mathbf{y}_{1:t}.\label{eq:sparse_posterior_mean_batch}
\end{equation}
Conditioned on the current hyperparameters, the induced posterior
of the cascaded channel is
\begin{equation}
\mathbf{h}\mid\mathcal{F}_{t}\sim\mathcal{CN}\bigl(\mathbf{A}\bm{\mu}_{\mathbf{w},t},\mathbf{A}\mathbf{\Sigma}_{\mathbf{w},t}\mathbf{A}^{\mathrm{H}}\bigr).\label{eq:sparse_channel_posterior}
\end{equation}

Here, we adopt a warm-start SBL refinement, where the \ac{pdf} at
slot $t$ is initialized by the converged hyperparameters from slot
$t-1$. Moreover, with these fixed initial hyperparameters, the posterior
covariance and mean are first initialized through a Woodbury-based
rank-one update from the previous slot posterior. The Thompson-sampling
step is then implemented in the coefficient domain. At the beginning
of slot $t$, we draw
\begin{equation}
\mathbf{w}'_{t}\sim\mathcal{CN}(\bm{\mu}_{\mathbf{w},t-1},\mathbf{\Sigma}_{\mathbf{w},t-1}),\hspace*{1em}\mathbf{h}'_{t}=\mathbf{A}\mathbf{w}_{t}'
\end{equation}
and choose the RIS reflection vector according to the same phase-alignment
rule as in Section \ref{sec:Bayesian-Bandit-Beamforming} as
\begin{equation}
\mathbf{b}_{t}\in\arg\max_{\mathbf{b}\in\mathcal{B}}\left|\mathbf{b}^{\top}\mathbf{h}_{t}'\right|^{2}
\end{equation}
where element-wise solution is
\begin{equation}
[\mathbf{b}_{t}]_{n}=\exp\big(-j\angle([\mathbf{h}_{t}']_{n})\big),\qquad n=1,\ldots,N.
\end{equation}

\subsection{Hyper-parameter Refinement and Computational Complexity}

The posterior above is conditioned on the current hyperparameters
$(\text{diag}(\bm{\gamma}),\varpi)$. However, these hyperparameters
are not kept fixed across slots. At slot $t$, we initialize the hyperparameters
from slot $t-1$ as $\varpi_{t}^{(0)}=\varpi_{t-1}$ and $\gamma_{m,t}^{(0)}=\gamma_{m,t-1}$.
After obtaining the received signal $y_{t}$ with sensing row $\bm{\phi}_{t}^{\top}=\mathbf{b}_{t}^{\top}\mathbf{A}$,
the initial posterior at slot $t$ can be obtained from the previous-slot
posterior through a Woodbury-based rank-one update as
\begin{align}
\mathbf{\Sigma}_{\mathbf{w},t}^{(0)}= & \mathbf{\Sigma}_{\mathbf{w},t-1}-\mathbf{\Sigma}_{\mathbf{w},t-1}\bm{\phi}_{t}^{*}\left(\varpi_{t-1}^{-1}+\bm{\phi}_{t}^{\top}\mathbf{\Sigma}_{\mathbf{w},t-1}\bm{\phi}_{t}^{*}\right)^{-1}\nonumber \\
 & \:\times\bm{\phi}_{t}^{\top}\mathbf{\Sigma}_{\mathbf{w},t-1},\label{eq:warm_woodbury_sigma}\\
\bm{\mu}_{\mathbf{w},t}^{(0)}= & \bm{\mu}_{\mathbf{w},t-1}+\mathbf{\Sigma}_{\mathbf{w},t-1}\bm{\phi}_{t}^{*}\left(\varpi_{t-1}^{-1}+\bm{\phi}_{t}^{\top}\mathbf{\Sigma}_{\mathbf{w},t-1}\bm{\phi}_{t}^{*}\right)^{-1}\nonumber \\
 & \:\times\left(y_{t}-\bm{\phi}_{t}^{\top}\bm{\mu}_{\mathbf{w},t-1}\right).\label{eq:warm_woodbury_mu}
\end{align}

The pair $(\bm{\mu}_{\mathbf{w},t}^{(0)},\mathbf{\Sigma}_{\mathbf{w},t}^{(0)})$
is then used as an initialization, after which the inner warm-start
SBL iterations refine the posterior and hyperparameters until convergence.
Starting from (\ref{eq:warm_woodbury_sigma})$-$(\ref{eq:warm_woodbury_mu}),
we run inner SBL iteration indexed by $k=1,2,\dots$ until convergence.
At the inner iteration $k$, the posterior covariance and mean are
updated as
\begin{align}
\mathbf{\Sigma}_{\mathbf{w},t}^{(k)} & =\left(\varpi_{t}^{(k-1)}\bm{\Phi}_{1:t}^{\mathrm{H}}\bm{\Phi}_{1:t}+\mathrm{diag}\bigl(\bm{\gamma}_{t}^{(k-1)}\bigr)\right)^{-1},\label{eq:warm_sbl_sigma}\\
\bm{\mu}_{\mathbf{w},t}^{(k)} & =\varpi_{t}^{(k-1)}\mathbf{\Sigma}_{\mathbf{w},t}^{(k)}\bm{\Phi}_{1:t}^{\mathrm{H}}\mathbf{y}_{1:t}.\label{eq:warm_sbl_mu}
\end{align}
Then the hyperparameters are updated as
\begin{align}
\varpi_{t}^{(k)} & \leftarrow\frac{t+a}{b+\left\Vert \mathbf{y}_{1:t}-\bm{\Phi}_{1:t}\bm{\mu}_{\mathbf{w},t}^{(k)}\right\Vert _{2}^{2}+f_{\varpi}^{(k)}},\label{eq:warm_sbl_alpha}\\
\gamma_{m,t}^{(k)} & \leftarrow\frac{c+1}{d+\left|[\bm{\mu}_{\mathbf{w},t}^{(k)}]_{m}\right|^{2}+[\mathbf{\Sigma}_{\mathbf{w},t}^{(k)}]_{m,m}}\label{eq:warm_sbl_delta}
\end{align}
where
\[
f_{\varpi}^{(k)}=(\varpi_{t}^{(k-1)})^{-1}\sum_{m=1}^{M}\left(1-\gamma_{m,t}^{(k-1)}[\mathbf{\Sigma}_{\mathbf{w},t}^{(k)}]_{m,m}\right).
\]

The inner iterations in $(\ref{eq:warm_sbl_sigma})-(\ref{eq:warm_sbl_delta})$
are repeated until convergence. Let $(\varpi_{t},\{\gamma_{m,t}\}_{m=1}^{M},\bm{\mu}_{\mathbf{w},t},\mathbf{\Sigma}_{\mathbf{w},t})$
denote the converged result at slot $t$. This posterior is then used
to form the Thompson-sampling distribution for the next slot, while
$(\varpi_{t},\{\gamma_{m,t}\}_{m=1}^{M})$ are passed to slot $t+1$
as the warm-start initialization.

Let $K_{t}$ denote the number of inner SBL iterations at slot $t$.
With the Woodbury-based warm-start initialization in $(\ref{eq:warm_woodbury_sigma})\text{--}(\ref{eq:warm_woodbury_mu})$,
the initial posterior update requires $\mathcal{O}(M^{2})$ operations,
whereas each subsequent inner refinement step involves updating the
posterior covariance and mean which requires $\mathcal{O}(M^{3})$
operations. Therefore, the overall per-slot complexity is given by
$\mathcal{O}\big(M^{2}+K_{t}M^{3}+NM\big)$, where the final $\mathcal{O}(NM)$
term accounts for the reconstruction of the channel in the original
domain.

\begin{algorithm}[t]
\caption{SBL-based Thompson sampling with warm starts}
\label{alg:sbl-ts}
\begin{algorithmic}[1]
\REQUIRE Dictionary $\mathbf A$, weakly informative hyperparameters, noise initialization $\varpi_0$, initial precisions $\boldsymbol\gamma_0$, maximum inner iterations $K_{\max}$, stopping tolerance $\epsilon_{\rm tol}$, and horizon $T$.
\FOR{$t=1,2,\ldots,T$}
\STATE Draw $\mathbf w'_t$ from the current coefficient posterior and set $\mathbf h'_t=\mathbf A\mathbf w'_t$.
\STATE Choose $[\mathbf b_t]_n=\exp\{-j\angle([\mathbf h'_t]_n)\}$.
\STATE Use $\mathbf b_t$ for pilot and payload symbols and observe $y_t$.
\STATE Apply the Woodbury warm-start update for the coefficient posterior.
\STATE Refine $\varpi$, $\boldsymbol\gamma$, posterior mean, and posterior covariance by SBL iterations.
\STATE Stop when the relative change is below $\epsilon_{\rm tol}$ or $K_{\max}$ is reached.
\ENDFOR
\ENSURE RIS configurations and sparse posterior sequence.
\end{algorithmic}
\end{algorithm}

\section{Simulation Results}

This section evaluates the proposed MMSE-based and SBL-based Thompson-sampling
policies for RIS-assisted wireless communication. The experiments
compare convergence under different channel regimes, sensitivity to
\ac{snr}, multi-armed bandit (MAB)-based baselines, and dictionary
performance.

\subsection{Environment Setup}

\begin{figure*}[t]
\centering
\subfigure[]{\includegraphics[width=0.32\textwidth]{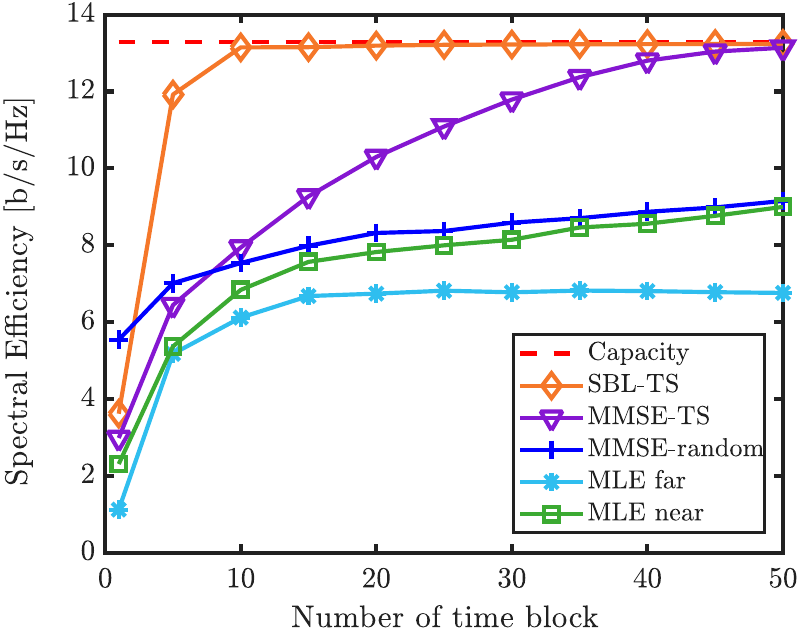}\label{fig:coveragence-1}}\hfill
\subfigure[]{\includegraphics[width=0.32\textwidth]{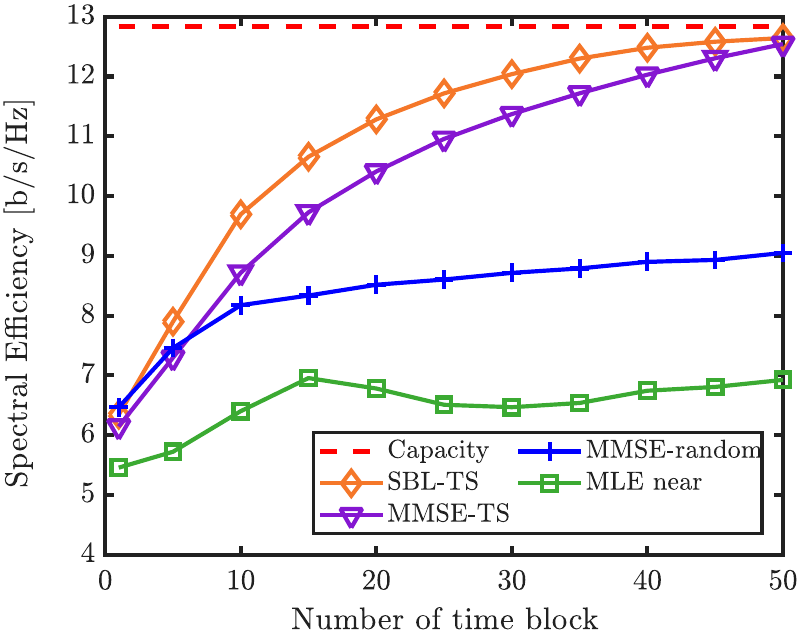}\label{fig:convergence-2}}\hfill
\subfigure[]{\includegraphics[width=0.32\textwidth]{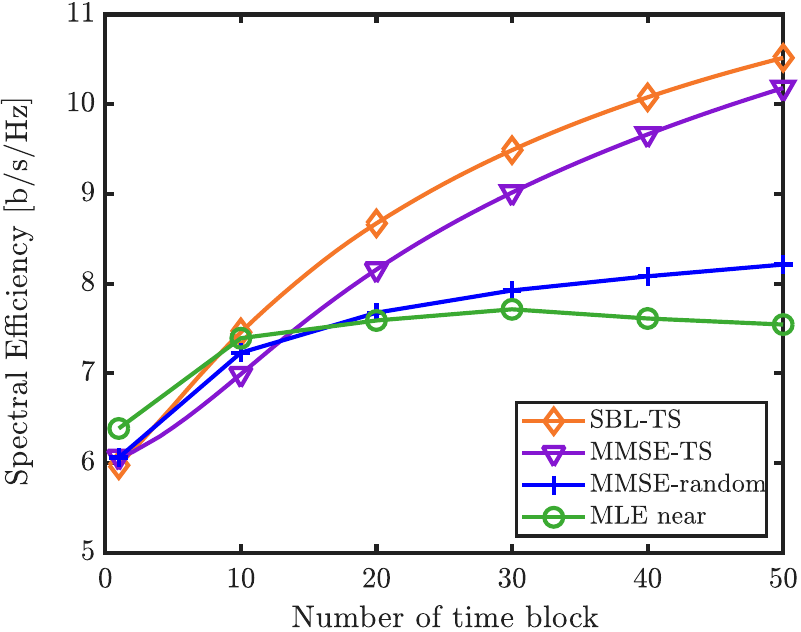}\label{fig:convergence-3}}
\caption{Convergence performance under three channel settings: (a) instantaneous spectral efficiency versus time block for the \ac{los} scenario; (b) instantaneous spectral efficiency versus time block for the multipath scenario; (c) average spectral efficiency versus time block for the Rayleigh fading scenario.}
\label{fig:convergence-three-panel}
\end{figure*}

Unless otherwise stated, we consider a narrowband RIS-assisted downlink
at 28 GHz with half-wavelength element spacing $d=\lambda/2$ and
$N=100$ reflecting elements. The simulations cover a \ac{los}-dominant
sparse cascaded channel, a hybrid multipath channel with near- and
far-field components, and an unstructured Rayleigh fading channel.

For the geometric regimes, the path angles in the setup are drawn
from $\theta\sim\mathcal{U}[-\pi/2,\pi/2]$. The path distances are
drawn from $r\sim\mathcal{U}[0.62\sqrt{N^{3}\lambda^{2}/8},10Z]$,
where $Z=2N^{2}d^{2}/\lambda=N^{2}\lambda/2$ is the Rayleigh distance
\cite{Guerra2021distance}. For the sparse geometric channels, we
independently generate the BS-RIS and RIS-user channel vectors as
sums of near-/far-field steering responses, and then form the cascaded
channel by element-wise multiplication. The geometric parameters are
randomly drawn within the prescribed angular and distance ranges.
The Rayleigh fading channel is generated as $\mathbf{h}\sim\mathcal{CN}(\mathbf{0},\mathbf{I}_{N})$.

Unless otherwise specified, the pilot and data SNRs are identical
and set to 0 dB. Spectral efficiency is evaluated using the effective-rate
definition in (\ref{eq:effective_rate}). We set $\eta=1$ and report
the normalized spectral efficiency; using any other common payload
fraction would only rescale all curves by the same factor. Unless
a figure caption or table states otherwise, the curves are averaged
over 1000 independent Monte Carlo realizations.

\subsection{Baseline Schemes}

We compare with the following three baselines:
\begin{itemize}
\item Parametric \ac{mle} \cite{MLE2}: This baseline uses a model-based
channel estimator. The far-field variant assumes planar wavefronts
and estimates angular parameters only, while the near-field variant
uses spherical wavefronts and estimates both angular and range parameters.
\item Non-adaptive MMSE: This baseline uses randomly generated RIS configurations
for channel probing and then designs the RIS configuration based on
the resulting \ac{mmse} channel estimate. In the figures, this baseline
is denoted as MMSE-random.
\item $\varepsilon$-greedy MAB \cite{Mohamed2021B}: This finite-codebook
bandit baseline selects a random RIS configuration with probability
$\varepsilon=0.3$, and otherwise exploits the configuration with
the best observed performance so far. 
\end{itemize}

\subsection{Evaluation Metrics}

We use two metrics. The first metric is the block-averaged spectral
efficiency, referred to simply as spectral efficiency hereafter. To
account for the estimate-then-optimize baselines, the spectral efficiency
is evaluated per time block by averaging over two RIS configurations:
$R_{t,\text{b}}=\frac{1}{2}\sum_{\ell=1}^{2}U_{t}(\mathbf{b}_{t}^{(\ell)},\mathbf{h})$
where $U_{t}(\cdotp)$ is defined in (\ref{eq:effective_rate}). Here,
$\mathbf{b}_{t}^{(1)}$ and $\mathbf{b}_{t}^{(2)}$ denote the two
RIS configurations used within time block $t$. For estimate-then-optimize
baselines, they correspond to the pilot and data-transmission configurations,
respectively; for the proposed policy, they correspond to two pilot
and data-sharing transmissions.

The second metric is the channel-learning error. When a curve or table
is labeled \ac{nmse}, it is defined as $\mathrm{NMSE}=||\hat{\mathbf{h}}_{t}-\mathbf{h}||_{2}^{2}/||\mathbf{h}||_{2}^{2}.$ 

\subsection{Convergence Performance and Comparison with Pilot-Based Methods}

Figs. \ref{fig:coveragence-1}-\ref{fig:convergence-3} illustrate
the evolution of spectral efficiency under \ac{los}, hybrid multipath,
and Rayleigh fading conditions, respectively. The Capacity curve serves
as an upper bound assuming perfect-\ac{csi} and optimal beam alignment. 

As shown in Fig. \ref{fig:coveragence-1}, in the \ac{los}-dominant
regime, the proposed \ac{sbl}-based Thompson sampling approaches
the capacity reference within 10 time blocks. At 50 time blocks, it
is about 44\% higher than MMSE-random and MLE-near, and more than
95\% higher than MLE-far. The rapid convergence can be attributed
to the strong spatial sparsity of the channel, which allows \ac{sbl}
inference to rapidly concentrate the posterior on a low-dimensional
support.

As shown in Fig. \ref{fig:convergence-2}, in the hybrid multipath
scenario the convergence rate naturally degrades due to the increased
channel rank. Nevertheless, both the \ac{mmse}-based and \ac{sbl}-based
Thompson-sampling schemes consistently outperform non-adaptive \ac{mmse}
and \ac{mle}-based baselines. 

Figs. \ref{fig:coveragence-1} and \ref{fig:convergence-2} report
instantaneous spectral efficiency to show adaptation speed, whereas
 Fig. \ref{fig:convergence-3} uses the averaged spectral efficiency
$1/t\sum_{\tau=1}^{t}R_{\tau,\text{b}}$ for the more fluctuating
Rayleigh fading case.

As shown in Fig. \ref{fig:convergence-3}, in the Rayleigh fading
case, the geometric sparsity model is no longer well matched, so the
gap between the proposed schemes is smaller. Nevertheless, MMSE-based
Thompson sampling and SBL-based Thompson sampling still improve over
random-MMSE by about 24\% and 28\%, and over MLE-near by about 30\%
and 39\%, respectively, indicating that posterior sampling remains
useful even when the geometric sparse model is mismatched.

\subsection{Comparison under Different SNRs}

\selectlanguage{american}%
\begin{figure}
\begin{centering}
\includegraphics[width=0.88\columnwidth]{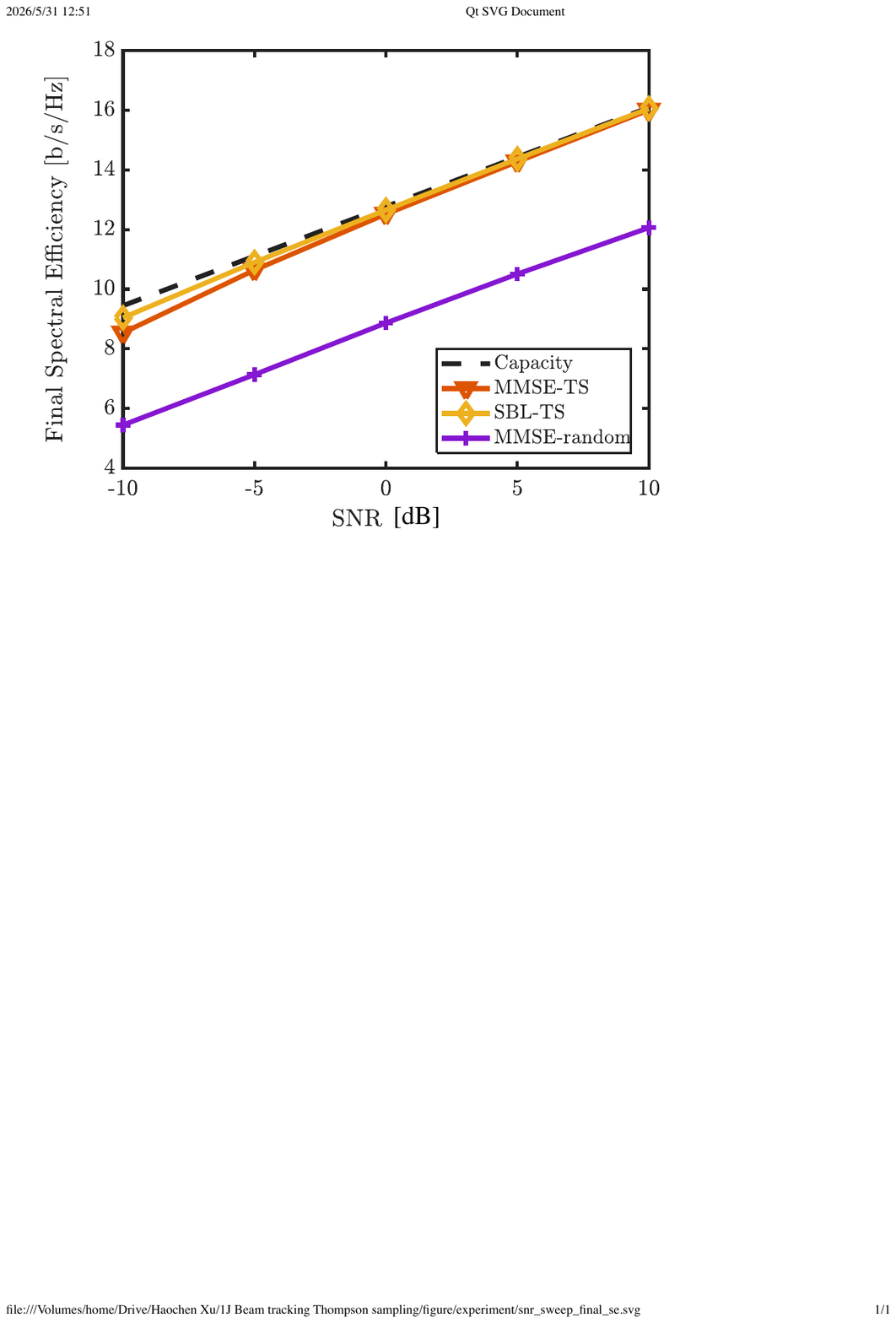}
\par\end{centering}
\caption{\protect\label{fig:snr_sweep_final_se}Final spectral efficiency versus
SNR in the hybrid multipath scenario with $N=100$, $L=3$, $T=100$.
The SNR is varied from -10 dB to 10 dB.}
\end{figure}

Fig. \ref{fig:snr_sweep_final_se} evaluates the sensitivity to the
pilot and data SNR. Even at -10 dB, SBL-based Thompson sampling achieves
9.0 b/s/Hz, only 4.3\% below the capacity reference, while improving
MMSE-based Thompson sampling and MMSE-random by about 5.9\% and 66\%,
respectively. At 10 dB, it nearly matches the capacity reference,
whereas MMSE-random remains at 12.1 b/s/Hz. These results indicate
that higher SNR improves posterior concentration for the adaptive
schemes, whereas non-adaptive algorithm remains limited by its lack
of channel-dependent RIS pattern selection. The results demonstrate
the finite-horizon robustness of the proposed policies under the multipath
setting.
\selectlanguage{english}%

\subsection{Comparison with a MAB Baseline}

\selectlanguage{american}%
\begin{figure}
\begin{centering}
\includegraphics[width=0.88\columnwidth]{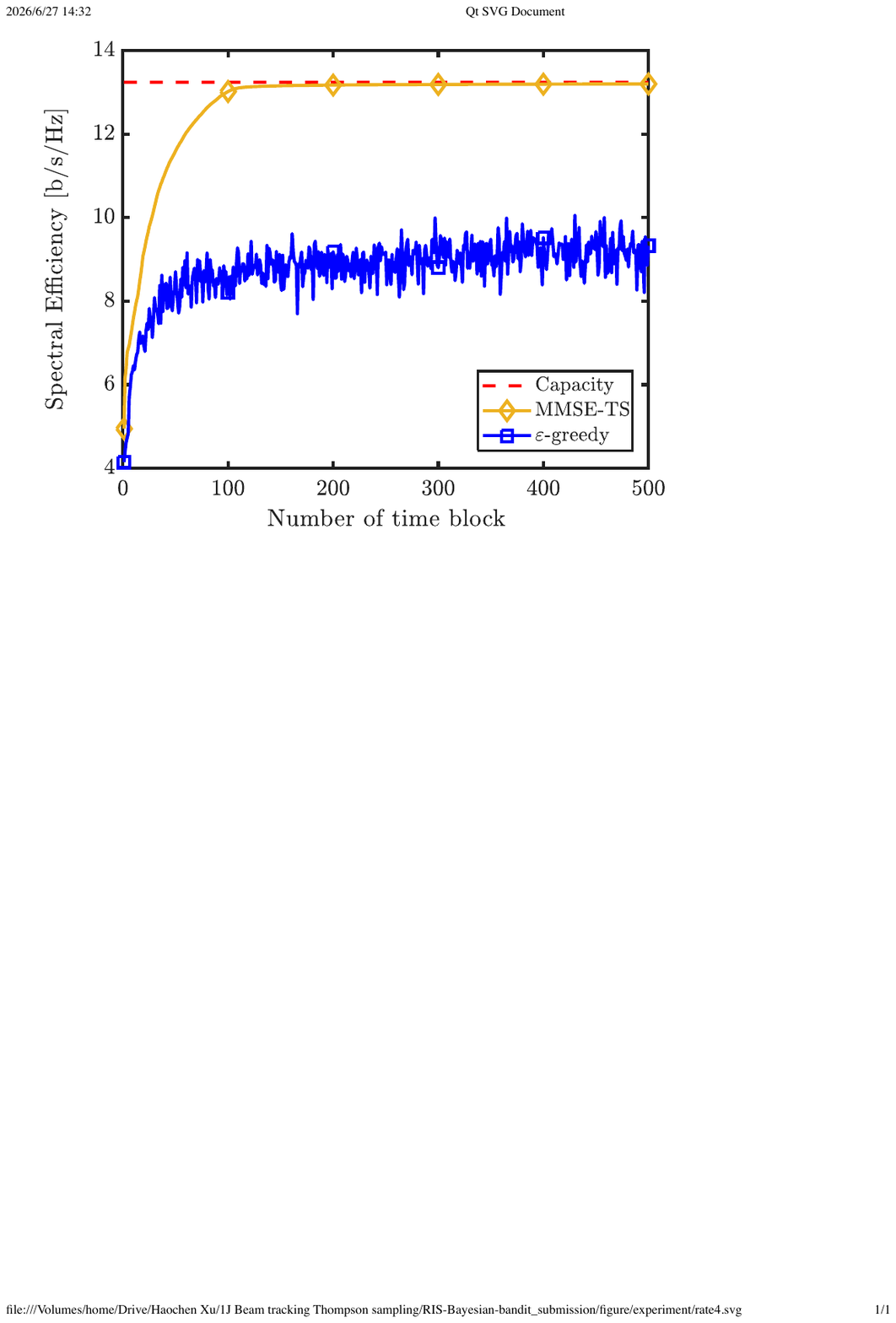}
\par\end{centering}
\caption{\protect\label{fig:MAB=000020compare}Average spectral efficiency
versus time block for the MAB baseline.}
\end{figure}

\selectlanguage{english}%
Fig. \ref{fig:MAB=000020compare} compares the proposed Bayesian Thompson-sampling
strategies with the classical $\varepsilon$-greedy MAB baseline.
Although this baseline gradually improves performance through exploration
and exploitation, the convergence is substantially slower.

At $t=500$, the proposed curve is about $13.2$ b/s/Hz, whereas $\varepsilon$-greedy
remains below $9.5$ b/s/Hz. The resulting gain is approximately 40\%.
This gap arises because the MAB baseline treats each RIS configuration
as an independent arm, while the proposed \ac{mmse}-based posterior
update exploits the channel correlation among beam patterns. 

\subsection{Comparison with Dictionary-Based Baselines}

\begin{table}[ht]
\centering
\caption{NMSE comparison of different dictionaries}
\label{tab:converge}
\begin{tabular}{ccccc}
    \toprule
    \textbf{Scenario} & \textbf{Proposed} & \textbf{Polar} & \textbf{3 dB coherence} & \textbf{DFT} \\
    \midrule
    $L=5$ & $-7.3\,\mathrm{dB}$ & $-6.9\,\mathrm{dB}$ & $-1.4\,\mathrm{dB}$ & $-5.1\,\mathrm{dB}$ \\
    $L=10$ & $-7.0\,\mathrm{dB}$ & $-5.9\,\mathrm{dB}$ & $-2.1\,\mathrm{dB}$ & $-4.7\,\mathrm{dB}$ \\
    $L=15$ & $-7.6\,\mathrm{dB}$ & $-6.3\,\mathrm{dB}$ & $-2.7\,\mathrm{dB}$ & $-4.9\,\mathrm{dB}$ \\
    Rayleigh & $-3.1\,\mathrm{dB}$ & $-3.3\,\mathrm{dB}$ & $-2.3\,\mathrm{dB}$ & $-2.4\,\mathrm{dB}$ \\
    \bottomrule
\end{tabular}
\end{table}

\selectlanguage{american}%
This subsection evaluates both the sparse representation accuracy
of the proposed energy-focusing dictionary with different baseline
dictionaries and their resulting online estimation performance. We
use Orthogonal Matching Pursuit (OMP) with a fixed sparsity level.
For a channel with $L$ effective cascaded components, OMP selects
$L$ atoms. We compare it with a far-field DFT dictionary \cite{columncovariance},
a polar-domain near-field dictionary \cite{columncovariance}, and
the 3 dB coherence dictionary \cite{3dB_coherence}.

\selectlanguage{english}%
As shown in Table \ref{tab:converge}, the proposed dictionary consistently
reduces the \ac{nmse} in the sparse geometric channels. Compared
with the \ac{dft} dictionary, it provides gains of about 2.2, 2.3,
and 2.7 dB for $L=5$, $L=10$, and $L=15$, respectively, and its
gain over the 3 dB coherence dictionary is around 5 dB. These improvements
come from the energy-focusing atom placement, which better captures
the hybrid near-/far-field structure of the cascaded channel. In the
Rayleigh fading case, the channel does not follow a sparse geometric
model, and the NMSE values of different dictionaries become close.

Fig. \ref{fig:compare-dictionary-1} shows the corresponding online
estimation behavior. At 100 time blocks, the proposed dictionary attains
an \ac{nmse} of $-11.5$ dB, outperforming the polar-domain, \ac{dft},
and 3 dB coherence dictionaries by about 1 dB, 4 dB, and 8 dB, respectively.
It uses 333 atoms, compared with 100 atoms for the \ac{dft} dictionary,
519 atoms for the polar-domain dictionary, and 115 atoms for the 3
dB coherence dictionary. Thus, compared with the polar-domain dictionary,
the proposed construction reduces the dictionary size by about 36\%
while still achieving better estimation accuracy. This provides a
more compact posterior representation for online bandit learning.

\selectlanguage{american}%
\begin{figure}
\begin{centering}
\includegraphics[width=0.88\columnwidth]{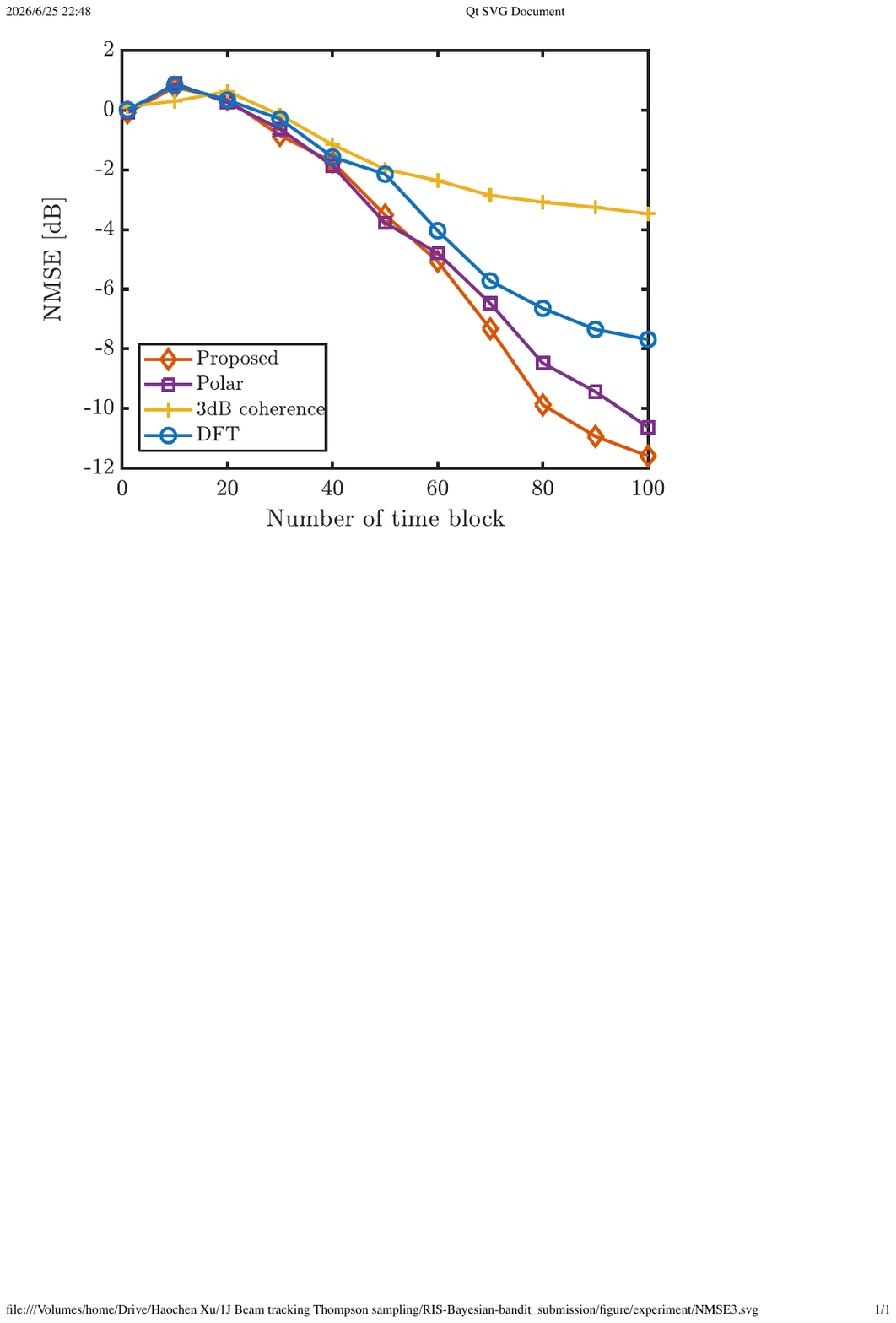}
\par\end{centering}
\caption{\protect\label{fig:compare-dictionary-1}NMSE versus the number of
time blocks.}
\end{figure}

\selectlanguage{english}%

\subsection{Posterior Scaling Factor}

\selectlanguage{american}%
\begin{figure}
\begin{centering}
\includegraphics[width=0.88\columnwidth]{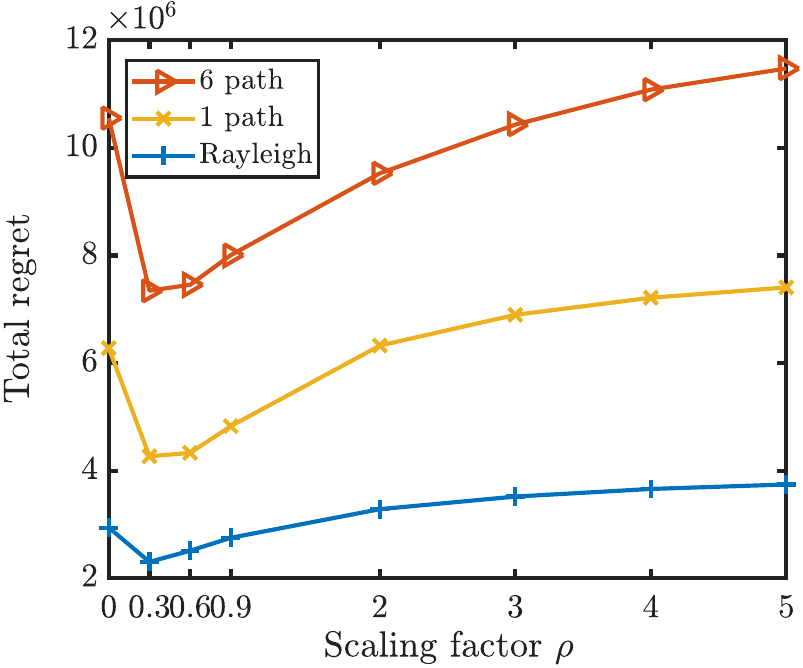}
\par\end{centering}
\caption{\protect\label{fig:scaling=000020factor}Cumulative received-power
regret versus posterior scaling factor.}
\end{figure}

Since Theorem $\ref{thm:Bayesian-regret-decomposition}$ shows that
the regret is governed by posterior uncertainty contraction, we study
the posterior scaling factor $\rho$ to evaluate how the exploration
level induced by the sampling covariance affects the finite-horizon
regret. Specifically, Fig. $\ref{fig:scaling=000020factor}$ reports
the cumulative received-power regret when the sampled channel is drawn
as $\mathbf{h}_{t}'\sim\mathcal{CN}(\hat{\mathbf{h}}_{t-1},\rho\bm{\Sigma}_{t-1})$
by MMSE-based Thompson sampling, where $\rho$ controls the exploration
level. The experiment is conducted with $N=100$, SNR = 0 dB, and
$T=200$ under the one-path, six-path, and Rayleigh fading settings.

The resulting curve exhibits a U-shaped trend with respect to the
scaling factor $\rho$. When $\rho$ is too small, the sampled channel
is close to the posterior mean, and the policy becomes insufficiently
exploratory. When $\rho$ is too large, the sampled channel is overly
exploratory, and the selected RIS patterns may deviate substantially
from the current channel estimate. In the figure, the lowest cumulative
regret is attained around $\rho=0.3$, which reduces the regret by
17\% compared with the greedy posterior-mean case $\rho=0$. In contrast,
over-scaling the posterior uncertainty to $\rho=5$ increases the
regret by 62\%. These results show that a moderate scaling factor
improves the exploration-exploitation balance and reduces regret,
while too little or too much scaling leads to higher regret.
\selectlanguage{english}%

\section{Conclusion}

This paper studied RIS phase-shift configuration under short coherence
times and hybrid near-/far-field propagation. We formulated RIS control
as a Bayesian bandit beamforming problem with implicit channel learning,
where each selected RIS pattern both probes the cascaded channel and
supports payload transmission. Based on this formulation, we developed
an MMSE-based Thompson-sampling policy and a hybrid-field SBL extension
that exploit sparse angle-distance structure for high frequency communications.
The analysis links Bayesian received-power regret to posterior covariance
contraction and establishes a conditional sublinear Bayesian-regret
guarantee under a cumulative information-growth condition. Simulation
results demonstrate higher performance than the considered baselines,
rapid convergence in spectral efficiency, and improved robustness
in hybrid multipath and Rayleigh fading scenarios.

\bibliographystyle{IEEEtran}
\bibliography{bib/JCgroup_reinforceRIS}

\clearpage

\appendices

\section{Proof of Lemma \ref{lem:One-slot-Mismatching-Bound}~\protect\label{sec:proof_one_slot_mismatching}}

By definition of the phase-alignment vector, $[\mathbf{b}_{*}]_{n}=e^{-j\angle(h_{n})}$
for each $n$. Therefore,
\begin{equation}
\mathbf{b}_{*}^{\top}\mathbf{h}=\sum_{n=1}^{N}e^{-j\angle(h_{n})}h_{n}=\sum_{n=1}^{N}|h_{n}|=\|\mathbf{h}\|_{1}
\end{equation}
which is real and nonnegative, and hence $\big|\mathbf{b}_{*}^{\top}\mathbf{h}\big|=\|\mathbf{h}\|_{1}$.
In this way we have, 
\begin{equation}
|\mathbf{b}_{*}^{\top}\mathbf{h}|^{2}-|\mathbf{b}_{t}^{\top}\mathbf{h}|^{2}=(\|\mathbf{h}\|_{1}-|\mathbf{b}_{t}^{\top}\mathbf{h}|)(\|\mathbf{h}\|_{1}+|\mathbf{b}_{t}^{\top}\mathbf{h}|).\label{eq:app_factorization}
\end{equation}

Since $|\mathbf{b}_{t}^{\top}\mathbf{h}|\le\|\mathbf{h}\|_{1}$, we
have
\begin{equation}
\|\mathbf{h}\|_{1}+|\mathbf{b}_{t}^{\top}\mathbf{h}|\le2\|\mathbf{h}\|_{1}.\label{eq:app_ac_upper}
\end{equation}
It remains to bound the first factor, since we have $\mathbf{b}_{t}^{\top}\mathbf{h}'_{t}=\|\mathbf{h}'_{t}\|_{1}$
which is real and nonnegative. Therefore,
\begin{align}
\|\mathbf{h}\|_{1}-|\mathbf{b}_{t}^{\top}\mathbf{h}| & \le\|\mathbf{h}\|_{1}-\Re\left\{ \mathbf{b}_{t}^{\top}\mathbf{h}\right\} \nonumber \\
 & =\|\mathbf{h}\|_{1}-\Re\left\{ \mathbf{b}_{t}^{\top}\mathbf{h}+\mathbf{b}_{t}^{\top}\mathbf{h}'_{t}-\mathbf{b}_{t}^{\top}\mathbf{h}'_{t}\right\} \nonumber \\
 & =\|\mathbf{h}\|_{1}-\|\mathbf{h}'_{t}\|_{1}+\Re\left\{ \mathbf{b}_{t}^{\top}(\mathbf{h}'_{t}-\mathbf{h})\right\} \nonumber \\
 & \le\big|\|\mathbf{h}\|_{1}-\|\mathbf{h}'_{t}\|_{1}\big|+\left|\Re\left\{ \mathbf{b}_{t}^{\top}(\mathbf{h}'_{t}-\mathbf{h})\right\} \right|\nonumber \\
 & \le\|\mathbf{h}-\mathbf{h}'_{t}\|_{1}+\|\mathbf{h}-\mathbf{h}'_{t}\|_{1}=2\|\mathbf{h}-\mathbf{h}'_{t}\|_{1}\label{eq:app_a_minus_c}
\end{align}
where the last inequality uses reverse triangle inequality \cite{triangle_inequality}
and the fact that each entry of $\mathbf{b}_{t}$ has unit modulus
and utilizing Hölder inequality \cite{real_analysis}. Substituting
$(\ref{eq:app_ac_upper})$ and $(\ref{eq:app_a_minus_c})$ into $(\ref{eq:app_factorization})$,
we obtain
\begin{equation}
|\mathbf{b}_{*}^{\top}\mathbf{h}|^{2}-|\mathbf{b}_{t}^{\top}\mathbf{h}|^{2}\le4\|\mathbf{h}\|_{1}\|\mathbf{h}-\mathbf{h}'_{t}\|_{1}.
\end{equation}

Finally, using $\|\mathbf{x}\|_{1}\le\sqrt{N}\|\mathbf{x}\|_{2}$,
we further get
\begin{align}
|\mathbf{b}_{*}^{\top}\mathbf{h}|^{2}-|\mathbf{b}_{t}^{\top}\mathbf{h}|^{2} & \le4(\sqrt{N}\|\mathbf{h}\|_{2})(\sqrt{N}\|\mathbf{h}-\mathbf{h}'_{t}\|_{2}),\nonumber \\
 & \le4N\|\mathbf{h}\|_{2}\|\mathbf{h}-\mathbf{h}'_{t}\|_{2},
\end{align}
which proves $(\ref{eq:one_slot_decomposition})$.

\section{Proof of Theorem \ref{thm:Bayesian-regret-decomposition} ~\protect\label{sec:proof_bayesian_regret_decomposition}}

Recall the one-slot instantaneous received-power regret as $\Delta_{t}\triangleq|\mathbf{b}_{*}^{\top}\mathbf{h}|^{2}-|\mathbf{b}_{t}^{\top}\mathbf{h}|^{2}.$
Then $\mathcal{R}_{T}=\sum_{t=1}^{T}\mathbb{E}[\Delta_{t}]$. By Lemma
$\ref{lem:One-slot-Mismatching-Bound}$, we obtain $\Delta_{t}\le4N\|\mathbf{h}\|_{2}\|\mathbf{h}-\mathbf{h}'_{t}\|_{2}.$
Conditioning on $\mathcal{F}_{t-1}$, taking the expectation and applying
Cauchy-Schwarz gives
\begin{align}
\mathbb{E}\left[\Delta_{t}\mid\mathcal{F}_{t-1}\right] & \le4N\Big(\mathbb{E}[\|\mathbf{h}\|_{2}^{2}\mid\mathcal{F}_{t-1}]\Big)^{1/2}\nonumber \\
 & \quad\times\Big(\mathbb{E}[\|\mathbf{h}-\mathbf{h}'_{t}\|_{2}^{2}\mid\mathcal{F}_{t-1}]\Big)^{1/2}.\label{eq:app_conditional_cs}
\end{align}

Under the Gaussian posterior update, $\mathbf{h}\mid\mathcal{F}_{t-1}\sim\mathcal{CN}(\bm{\mu}_{t-1},\bm{\Sigma}_{t-1}),$
and sample $\mathbf{h}'_{t}$ is drawn independently from the same
conditional distribution. Therefore, $\mathbf{h}$ and $\mathbf{h}'_{t}$
are conditionally i.i.d. Given $\mathcal{F}_{t-1}$, we have
\begin{equation}
\mathbb{E}\left[\|\mathbf{h}-\mathbf{h}'_{t}\|_{2}^{2}\mid\mathcal{F}_{t-1}\right]=2\text{tr}(\bm{\Sigma}_{t-1}).\label{eq:app_same_posterior}
\end{equation}

Since $\bm{\Sigma}_{t-1}$ is determined by $\mathcal{F}_{t-1}$,
the trace term in $(\ref{eq:app_same_posterior})$ is $\mathcal{F}_{t-1}$-measurable.

Substituting $(\ref{eq:app_same_posterior})$ into $(\ref{eq:app_conditional_cs})$
yields
\begin{equation}
\mathbb{E}\left[\Delta_{t}\mid\mathcal{F}_{t-1}\right]\le4\sqrt{2}N\Big(\mathbb{E}[\|\mathbf{h}\|_{2}^{2}\mid\mathcal{F}_{t-1}]\Big)^{1/2}\bigl(\text{tr}(\bm{\Sigma}_{t-1})\bigr)^{1/2}.
\end{equation}

Taking expectation again and applying Cauchy-Schwarz together with
the tower property gives
\begin{align}
\mathbb{E}[\Delta_{t}] & \le4\sqrt{2}N\mathbb{E}\!\left[\Big(\mathbb{E}[\|\mathbf{h}\|_{2}^{2}\mid\mathcal{F}_{t-1}]\Big)^{1/2}\bigl(\text{tr}(\bm{\Sigma}_{t-1})\bigr)^{1/2}\right]\nonumber \\
 & \le4\sqrt{2}N\Big(\mathbb{E}\!\left[\mathbb{E}[\|\mathbf{h}\|_{2}^{2}\mid\mathcal{F}_{t-1}]\right]\Big)^{1/2}\Big(\mathbb{E}[\text{tr}(\bm{\Sigma}_{t-1})]\Big)^{1/2}\nonumber \\
 & =4\sqrt{2}NS\sqrt{\mathbb{E}[\text{tr}(\bm{\Sigma}_{t-1})]}.
\end{align}

Summing over $\ensuremath{t=1,2\ldots,T}$ proves $(\ref{eq:Bayesian_regret_decomposition}).$

\section{Proof of Corollary \ref{cor:Sublinear=000020Bayesian=000020regret}
~\protect\label{sec:proof_Sublinear=000020Bayesian=000020regret}}

Under $\mathbf{h}\sim\mathcal{CN}(\mathbf{0},\mathbf{I}_{N})$, the
prior covariance is $\bm{\Sigma}_{0}=\mathbf{I}_{N}$. The sequential
Gaussian observation model implies $\bm{\Sigma}_{t}=(\mathbf{I}_{N}+\sigma^{-2}\mathbf{G}_{t})^{-1}$.
Also, $S^{2}=\mathbb{E}[\|\mathbf{h}\|_{2}^{2}]=N$. We note that
condition (\ref{eq:condition_corollary_3}) is required only for $t\geq\bar{t}_{0}$.
That is because $\mathbf{G}_{t}$ is the sum of $t$ rank-one matrices
and hence cannot be full rank before the number of slots reaches the
channel dimension $N$. We split the regret bound in Theorem \ref{thm:Bayesian-regret-decomposition}
into the contribution of the first $\bar{t}_{0}$ slots and the contribution
of the remaining slots. Specifically, we first consider the contribution
of the first $\bar{t}_{0}$ slots. Since $\bm{\Sigma}_{t}\preceq\mathbf{I}_{N}$
for all $t\geq0$, we have $\text{tr}(\bm{\Sigma}_{t})\leq N$ for
all $t\geq0$. Therefore, 
\begin{equation}
\sum_{t=1}^{\min\{\bar{t}_{0},T\}}\sqrt{\mathbb{E}[\text{tr}(\bm{\Sigma}_{t-1})]}\leq\bar{t}_{0}\sqrt{N}
\end{equation}

According to (\ref{eq:Bayesian_regret_decomposition}), the total
regret accumulated over the first $\bar{t}_{0}$ is upper bounded
by a constant:

\begin{equation}
4\sqrt{2}NS\sum_{t=1}^{\min\{\bar{t}_{0},T\}}\sqrt{\mathbb{E}[\text{tr}(\bm{\Sigma}_{t-1})]}\le4\sqrt{2}NS\bar{t}_{0}\sqrt{N}.\label{eq:finite_initial_phase_bound}
\end{equation}

Then, consider the slot $t\geq\bar{t}_{0}+1$. For such $t$, condition
(\ref{eq:condition_corollary_3}) gives $\lambda_{\min}(\mathbf{G}_{t-1})\ge C((t-1)-\bar{t}_{0}+1)=C(t-\bar{t}_{0})$
almost surely. Hence, $\mathbf{G}_{t-1}\succeq C(t-\bar{t}_{0})\mathbf{I}_{N}.$
Substituting this lower bound into the $\bm{\Sigma}_{t-1}$ gives,
for all $t\geq\bar{t}_{0}+1$,

\begin{equation}
\bm{\Sigma}_{t-1}=(\mathbf{I}_{N}+\sigma^{-2}\mathbf{G}_{t-1})^{-1}\preceq\frac{1}{1+\sigma^{-2}C(t-\bar{t}_{0})}\mathbf{I}_{N}.
\end{equation}

Therefore,
\begin{equation}
\text{tr}(\bm{\Sigma}_{t-1})\le\frac{N}{1+\sigma^{-2}C(t-\bar{t}_{0})},\,\forall t\geq\bar{t}_{0}+1\label{eq:trace_sigma_tminus1_bound_post_t0}
\end{equation}

Since this bound holds almost surely, it also holds after taking expectation:
\begin{equation}
\mathbb{E}[\text{tr}(\bm{\Sigma}_{t-1})]\le\frac{N}{1+\sigma^{-2}C(t-\bar{t}_{0})},\,\forall t\geq\bar{t}_{0}+1.\label{eq:expected_trace_sigma_tminus1_bound_post_t0}
\end{equation}

Substituting (\ref{eq:finite_initial_phase_bound}) and (\ref{eq:expected_trace_sigma_tminus1_bound_post_t0})
into the Bayesian regret decomposition in Theorem \ref{thm:Bayesian-regret-decomposition}
gives
\begin{equation}
\mathcal{R}_{T}\le4\sqrt{2}NS\bar{t}_{0}\sqrt{N}+4\sqrt{2}NS\sum_{t=\bar{t}_{0}+1}^{T}\frac{\sqrt{N}}{\sqrt{1+\sigma^{-2}C(t-\bar{t}_{0})}}.\label{eq:RT_two_parts_cor1}
\end{equation}

Let $a\triangleq\sigma^{-2}C$. Since $x\mapsto(1+ax)^{-1/2}$ is
positive and decreasing on $x>0$, we have
\begin{align}
\sum_{t=\bar{t}_{0}+1}^{T} & \frac{1}{\sqrt{1+\sigma^{-2}C(t-\bar{t}_{0})}}\nonumber \\
 & \le1+\int_{0}^{T-\bar{t}_{0}}\frac{dx}{\sqrt{1+\sigma^{-2}Cx}}\nonumber \\
 & =1+\frac{2\sigma^{2}}{C}\bigl(\sqrt{1+\sigma^{-2}C(T-\bar{t}_{0})}-1\bigr)\nonumber \\
 & =O(\sqrt{T}).\label{eq:integral_test_cor1}
\end{align}

Combining (\ref{eq:RT_two_parts_cor1}) and (\ref{eq:integral_test_cor1})
shows that $\mathcal{R}_{T}=O(\sqrt{T})$ with $T\rightarrow\infty$.
When $t_{0}\leq N$, we have $\bar{t_{0}}=N$. Hence, the regret bound
becomes 
\[
4\sqrt{2}NS\bar{t}_{0}\sqrt{N}=4\sqrt{2}N^{5/2}
\]
which equals $4\sqrt{2}N^{3}$ under the Rayleigh prior $S=\sqrt{N}$.

\section{Proof of Theorem \ref{thm:sublinear=000020comm=000020with=000020channel=000020excitation}~\protect\label{sec:proof=000020sublinear=000020comm=000020with=000020channel=000020excitation} }

We define $\mathbf{S}_{t}\triangleq\sum_{s=1}^{t}\mathbf{Q}_{s}$,
$\mathbf{G}_{t}\triangleq\sum_{s=1}^{t}\mathbf{b}_{s}^{*}\mathbf{b}_{s}^{\top}$
and $\mathbf{X}_{t}\triangleq\mathbf{b}_{t}^{*}\mathbf{b}_{t}^{\top}-\mathbf{Q}_{t}$.
Under the complex-vector convention used in this paper, $\mathbf{b}_{t}^{*}\mathbf{b}_{t}^{\top}$
is Hermitian positive semidefinite because, for any $\mathbf{x}\in\mathbb{C}^{N}$,
$\mathbf{x}^{\mathrm{H}}\mathbf{b}_{t}^{*}\mathbf{b}_{t}^{\top}\mathbf{x}=|\mathbf{b}_{t}^{\top}\mathbf{x}|^{2}\ge0$.
Since $\mathbf{Q}_{t}=\mathbb{E}[\mathbf{b}_{t}^{*}\mathbf{b}_{t}^{\top}\mid\mathcal{F}_{t-1}]$,
the sequence $\{\mathbf{X}_{t}\}_{t\ge1}$ is a Hermitian matrix martingale
difference sequence.

Since each entry of $\mathbf{b}_{t}$ has unit modulus, we have $\|\mathbf{b}_{t}^{*}\mathbf{b}_{t}^{\top}\|=N$.
Moreover, since $\mathbf{b}_{t}^{*}\mathbf{b}_{t}^{\top}\preceq N\mathbf{I}_{N}$,
its conditional expectation also satisfies $\mathbf{Q}_{t}\preceq N\mathbf{I}_{N}$
and hence $||\mathbf{Q}_{t}||\leq N$. Therefore, 
\begin{equation}
\|\mathbf{X}_{t}\|\le\|\mathbf{b}_{t}^{*}\mathbf{b}_{t}^{\top}\|+\|\mathbf{Q}_{t}\|\le2N.
\end{equation}

It follows that
\begin{equation}
\mathbf{X}_{t}^{2}\preceq\|\mathbf{X}_{t}\|^{2}\mathbf{I}_{N}\preceq4N^{2}\mathbf{I}_{N}.
\end{equation}

By the matrix Azuma inequality for self-adjoint matrix martingales
as stated in \cite{tropp2012user}, for any $u>0$,
\begin{equation}
\mathbb{P}\left(\lambda_{\min}\left(\sum_{s=1}^{t}\mathbf{X}_{s}\right)\le-u\right)\le N\exp\left(-\frac{u^{2}}{32N^{2}t}\right).
\end{equation}
Taking $u=(c_{\alpha}/2)t^{\alpha}$ yields
\begin{align}
\mathbb{P} & \left(\lambda_{\min}\left(\sum_{s=1}^{t}\mathbf{X}_{s}\right)\le-\frac{c_{\alpha}}{2}t^{\alpha}\right)\nonumber \\
 & \hspace*{1em}\le N\exp\left(-\frac{c_{\alpha}^{2}}{128N^{2}}t^{2\alpha-1}\right).\label{eq:matrix_azuma_poly}
\end{align}

Let $\mathcal{A}_{t}\triangleq\left\{ \lambda_{\min}(\mathbf{S}_{t})\ge c_{\alpha}t^{\alpha}\right\} $
and $\mathcal{B}_{t}\triangleq\left\{ \lambda_{\min}(\mathbf{G}_{t}-\mathbf{S}_{t})\ge-(c_{\alpha}/2)t^{\alpha}\right\} .$
By assumption, $\mathbb{P}(\mathcal{A}_{t}^{c})\le q_{t,\alpha}$.
The concentration bound in $(\ref{eq:matrix_azuma_poly})$ gives
\begin{equation}
\mathbb{P}(\mathcal{B}_{t}^{c})\le N\exp\left(-\frac{c_{\alpha}^{2}}{128N^{2}}t^{2\alpha-1}\right).
\end{equation}

On the event $\mathcal{A}_{t}\cap\mathcal{B}_{t}$, Weyl's inequality
gives
\begin{equation}
\lambda_{\min}(\mathbf{G}_{t})\ge\lambda_{\min}(\mathbf{S}_{t})+\lambda_{\min}(\mathbf{G}_{t}-\mathbf{S}_{t})\ge\frac{c_{\alpha}}{2}t^{\alpha}.
\end{equation}

On this event, the posterior covariance satisfies
\begin{equation}
\text{tr}(\bm{\Sigma}_{t})=\text{tr}\left((\mathbf{I}_{N}+\sigma^{-2}\mathbf{G}_{t})^{-1}\right)\le\frac{N}{1+\frac{c_{\alpha}}{2\sigma^{2}}t^{\alpha}}.
\end{equation}
On the complementary event, we use the trivial bound $\text{tr}(\bm{\Sigma}_{t})\le N$.
Therefore,
\begin{align}
\mathbb{E}[\text{tr}(\bm{\Sigma}_{t})]\le & \frac{N}{1+\frac{c_{\alpha}}{2\sigma^{2}}t^{\alpha}}+N\mathbb{P}(\mathcal{A}_{t}^{c})+N\mathbb{P}(\mathcal{B}_{t}^{c})\nonumber \\
\le & \frac{N}{1+\frac{c_{\alpha}}{2\sigma^{2}}t^{\alpha}}+Nq_{t,\alpha}\nonumber \\
 & \,+N^{2}\exp\left(-\frac{c_{\alpha}^{2}}{128N^{2}}t^{2\alpha-1}\right).\label{eq:app_trace_bound_poly}
\end{align}

Substituting (\ref{eq:app_trace_bound_poly}) into the Bayesian regret
decomposition in Theorem \ref{thm:Bayesian-regret-decomposition}.
Since $\mathbf{h}\sim\mathcal{CN}(\mathbf{0},\mathbf{I}_{N})$, we
have $S^{2}=\mathbb{E}[||\mathbf{h}||_{2}^{2}]=N$. Moreover, $\bm{\Sigma}_{0}=\mathbf{I}_{N}$,
and hence 
\begin{equation}
\sum_{t=1}^{T}\sqrt{\mathbb{E}[\text{tr}(\bm{\Sigma}_{t-1})]}=\sqrt{N}+\sum_{t=1}^{T-1}\sqrt{\mathbb{E}[\text{tr}(\bm{\Sigma}_{t-1})]}.\label{eq:seperation_trace}
\end{equation}

Substituting (\ref{eq:seperation_trace}) into (\ref{eq:app_trace_bound_poly})
gives 
\begin{align}
\mathcal{R}_{T}=4\sqrt{2} & N^{3/2}\biggl(\sqrt{N}+\sum_{t=1}^{T-1}\Bigl(\frac{N}{1+\frac{c_{\alpha}}{2\sigma^{2}}t^{\alpha}}+Nq_{t,\alpha}\nonumber \\
 & ++N^{2}\exp\bigl(-\frac{c_{\alpha}^{2}}{128N^{2}}t^{2\alpha-1}\bigr)\Bigr){}^{1/2}\biggr)
\end{align}
which is the stated finite-horizon bound.

Using $\sqrt{a+b+c}\le\sqrt{a}+\sqrt{b}+\sqrt{c}$, we obtain
\begin{align}
\mathcal{R}_{T} & =O\left(\sum_{t=1}^{T}t^{-\alpha/2}\right)+O\left(\sum_{t=1}^{T}\sqrt{q_{t,\alpha}}\right)\nonumber \\
 & \quad+O\left(\sum_{t=1}^{T}\exp\left(-\frac{c_{\alpha}^{2}}{256N^{2}}t^{2\alpha-1}\right)\right).
\end{align}

Since $\alpha>1/2$, the exponential series is summable. Also, $\sum_{t=1}^{T}t^{-\alpha/2}=O(T^{1-\alpha/2})=o(T)$
for any $\alpha>0$. Hence, if $\sum_{t=1}^{T}\sqrt{q_{t,\alpha}}=o(T)$,
then $\mathcal{R}_{T}/T\to0$ as $T\to\infty$. 

\section{Proof of Proposition \ref{prop:unified-dict-vector}~\protect\label{sec:unified-dict-vector-proof}}

For the $n$th RIS element, the steering vector in (\ref{eq:steering=000020vector})
satisfies 
\begin{equation}
[\mathbf{s}(\vartheta,r)]_{n}=\frac{1}{\sqrt{N}}\exp\left\{ j\pi\left[-n\vartheta+\frac{n^{2}d}{2r}(1-\vartheta^{2})\right]\right\} .
\end{equation}

Therefore,
\begin{align}
 & [\mathbf{s}(\vartheta_{\mathrm{b}},r_{\mathrm{b}})\odot\mathbf{s}(\vartheta_{\mathrm{u}},r_{\mathrm{u}})]_{n}\nonumber \\
 & =\frac{1}{N}\exp\{j\pi[-n(\vartheta_{\mathrm{b}}+\vartheta_{\mathrm{u}})+\frac{n^{2}d}{2}(\frac{1-\vartheta_{\mathrm{b}}^{2}}{r_{\mathrm{b}}}+\frac{1-\vartheta_{\mathrm{u}}^{2}}{r_{\mathrm{u}}})]\}.\label{eq:proof_unified_product_raw}
\end{align}

Let
\begin{equation}
\nu=\vartheta_{\mathrm{b}}+\vartheta_{\mathrm{u}},\qquad\kappa=\frac{1-\vartheta_{\mathrm{b}}^{2}}{r_{\mathrm{b}}}+\frac{1-\vartheta_{\mathrm{u}}^{2}}{r_{\mathrm{u}}}.\label{eq:kappa_equal}
\end{equation}

The folding rule gives $\vartheta_{{\rm eq}}-\nu\in\{-2,0,2\}$ in
(\ref{eq:folding}). Therefore, for each $n=0,1,\ldots,N-1$, 
\begin{equation}
e^{-j\pi n\vartheta_{{\rm eq}}}=e^{-j\pi n\nu}e^{-j\pi n(\vartheta_{{\rm eq}}-\nu)}=e^{-j\pi n\nu}.\label{eq:var_eq}
\end{equation}

Substituting equivalent representative (\ref{eq:kappa_equal}) and
(\ref{eq:var_eq}) into (\ref{eq:proof_unified_product_raw}) gives
\begin{equation}
[\mathbf{s}(\vartheta_{\mathrm{b}},r_{\mathrm{b}})\odot\mathbf{s}(\vartheta_{\mathrm{u}},r_{\mathrm{u}})]_{n}=\frac{1}{N}\exp\left\{ j\pi\left[-n\vartheta_{{\rm eq}}+\frac{n^{2}d}{2}\kappa\right]\right\} .\label{eq:including_kappa}
\end{equation}
If $\kappa=0$, then (\ref{eq:including_kappa}) equals $1/\sqrt{N}[\mathbf{s}(\vartheta_{\text{eq}},\infty)]_{n}.$
If $\kappa>0$ and $|\vartheta_{\text{eq}}|<1$, then the definition
of $r_{{\rm eq}}$, gives $\kappa=(1-\vartheta_{{\rm eq}}^{2})/r_{{\rm eq}}$.
Therefore,
\begin{align}
[\mathbf{s} & (\vartheta_{\mathrm{b}},r_{\mathrm{b}})\odot\mathbf{s}(\vartheta_{\mathrm{u}},r_{\mathrm{u}})]_{n}\nonumber \\
 & =\frac{1}{N}\exp\left\{ j\pi\left[-n\vartheta_{{\rm eq}}+\frac{n^{2}d}{2r_{{\rm eq}}}(1-\vartheta_{{\rm eq}}^{2})\right]\right\} \nonumber \\
 & =\frac{1}{\sqrt{N}}[\mathbf{s}(\vartheta_{{\rm eq}},r_{{\rm eq}})]_{n}.
\end{align}

Since this holds for every $n=0,1,\ldots,N-1$, the vector identity
in (\ref{eq:unified_half_wavelength_product}) follows. Applying this
identity to every BS-RIS/RIS-user path pair in (\ref{eq:multipath-equiv-chann})
gives (\ref{eq:unified_sparse_cascaded_channel}).

\section{Proof of Proposition \ref{prop:ring-angule}~\protect\label{sec:proof-ring-ang}}

Let $\omega(\vartheta)\triangleq1-\vartheta^{2}$ and define the normalized
inverse-range variable 
\begin{equation}
\mu\triangleq\frac{\omega(\vartheta)}{r}.\label{eq:definite_mu_r}
\end{equation}

We first analyze the range-boundary rule. Let $\kappa_{\delta}\triangleq4\beta_{\delta}^{2}$.
From (\ref{eq:range-equation-2}), the offset from a dictionary center
to its auxiliary $\delta$-power boundary in the $\mu$-domain is
$\kappa_{\delta}/Z$, which is determined by $\beta_{\delta}$ and
is independent of $\vartheta$. Since two neighboring dictionary atoms
share the same auxiliary $\delta$-power boundary, the center-to-center
inverse-range step is $2\kappa_{\delta}/Z$. Substituting $r=\omega(\vartheta)/\mu$
into the update rule $r^{+}=rZ\omega(\vartheta)/(Z\omega(\vartheta)+2\kappa_{\delta}r),$
we obtain the transformed update in the $\mu$-domain: 
\begin{equation}
\mu^{+}=\mu+\frac{2\kappa_{\delta}}{Z}.\label{eq:mu_plus}
\end{equation}

Thus, the range recursion corresponds to a linear arithmetic progression
in $\mu$, with a constant center-to-center step size $2\kappa_{\delta}/Z$
independent of the angle $\vartheta$.

Consider the range sequence $\{r_{a,q}\}$ at angle $\vartheta_{a}$.
Let $q=0$ denote the far-field center, i.e., $\mu_{a,0}=0$. Based
on (\ref{eq:mu_plus}), the center $q$ is explicitly given by 
\begin{equation}
\mu_{a,q}=q\frac{2\kappa_{\delta}}{Z},\qquad q=0,1,\ldots.\label{eq:mu_a_q}
\end{equation}

Next, consider the range-first sequence given by $r_{m,q}^{\text{RF}}=\omega(\vartheta_{m})r_{a,q}/\omega(\vartheta_{a})$,
with $r_{m,0}^{\text{RF}}=\infty$. For $q\ge1$, applying the definition
(\ref{eq:definite_mu_r}), its inverse variable transforms as 
\begin{equation}
\mu_{m,q}^{\text{RF}}=\frac{\omega(\vartheta_{m})}{r_{m,q}^{\text{RF}}}=\frac{\omega(\vartheta_{m})}{\frac{\omega(\vartheta_{m})}{\omega(\vartheta_{a})}r_{a,q}}=\frac{\omega(\vartheta_{a})}{r_{a,q}}=\mu_{a,q}\label{eq:mu_ring_a_q}
\end{equation}

For $q=0$, the same equality holds by the far-field convention $r=\infty$
and $\mu=0$. Substituting (\ref{eq:mu_a_q}) into (\ref{eq:mu_ring_a_q})
yields $\mu_{m,q}^{\text{RF}}=2q\kappa_{\delta}/Z$.

Now consider the angular-first sequence $\{r_{m,q}^{\text{AF}}\}$
generated at $\vartheta_{m}$, the far-field center has $\mu_{m,0}^{\text{AF}}=0$.
Since the update in the $\mu$-domain is angle independent, the angular-first
atoms satisfy $\mu_{m,q}^{\text{AF}}=2q\kappa_{\delta}/Z$.

Therefore, for every retained index pair $(m,q)$, $\mu_{m,q}^{\text{AF}}=\mu_{m,q}^{\text{RF}}$
and hence $r_{m,q}^{\text{RF}}=r_{m,q}^{\text{AF}}$, and for $q=0,$
the same conclusion holds by the far-field convention $r=\infty$.

It remains to verify that the shared $\delta$-power boundaries are
also preserved. For each retained neighboring pair, the common auxiliary
boundary lies one offset $\kappa_{\delta}/Z$ away from each center
in the $\mu$-domain. Let $\widetilde{r}_{m,q}^{{\rm RF}}$ denote
the boundary between $\ensuremath{r_{m,q}^{{\rm RF}}}$ and $r_{m,q+1}^{{\rm RF}}$,
defined by $\omega(\vartheta_{m})/\widetilde{r}_{m,q}^{\mathrm{RF}}=\mu_{m,q}^{\mathrm{RF}}+\kappa_{\delta}/Z=(2q+1)\kappa_{\delta}/Z$.
Hence, under the Fresnel range-boundary approximation,

\begin{align}
G(\vartheta_{m},\widetilde{r}_{m,q}^{{\rm RF}};\vartheta_{m},r_{m,q}^{{\rm RF}}) & =\delta,\\
G(\vartheta_{m},\widetilde{r}_{m,q}^{{\rm RF}};\vartheta_{m},r_{m,q+1}^{{\rm RF}}) & =\delta.\nonumber 
\end{align}

Therefore, the range-first construction has the same centers and the
same auxiliary $\delta$-power boundaries as the angular-first construction
for all retained index pairs. This proves the sampling-order equivalence.
\end{document}